\setlist[enumerate,1]{label=(\roman*)}
\renewcommand{\id}{\mathbbm1} 
\renewcommand{\hat}{\widehat} 
\renewcommand{\Pr}{\operatorname{P}}
\title{Tuning Parameter-Free Nonparametric Density Estimation from Tabulated Summary Data}
\author{\normalsize
Ji Hyung Lee\thanks{\footnotesize\setlength{\baselineskip}{4.4mm} 
Ji Hyung Lee: \href{mailto:jihyung@illinois.edu}{jihyung@illinois.edu}. Department of Economics, University of Illinois, 214 David Kinley Hall, 1407 West Gregory Drive, Urbana, IL 61801, USA}
\and\normalsize
Yuya Sasaki\thanks{\footnotesize\setlength{\baselineskip}{4.4mm} Yuya Sasaki: \href{mailto:yuya.sasaki@vanderbilt.edu}{yuya.sasaki@vanderbilt.edu}. Department of Economics, Vanderbilt University, VU Station B \#351819, 2301 Vanderbilt Place, Nashville, TN 37235-1819, USA\smallskip}
\and\normalsize
Alexis Akira Toda\thanks{\footnotesize\setlength{\baselineskip}{4.4mm} Alexis Akira Toda: \href{mailto:atoda@ucsd.edu}{atoda@ucsd.edu}. Department of Economics, University of California San Diego, 9500 Gilman Dr, \#0508, La Jolla, CA 92093-0508, USA\smallskip}
\and\normalsize
Yulong Wang\thanks{\footnotesize\setlength{\baselineskip}{4.4mm} Yulong Wang: \href{mailto:ywang402@syr.edu}{ywang402@syr.edu}. Department of Economics, Syracuse University, 110 Eggers Hall, Syracuse, NY 13244-1020, USA}
}
\date{\today}
\numberwithin{equation}{section}
\begin{document}
\maketitle

\begin{abstract}
Administrative data are often easier to access as tabulated summaries than in the original format due to confidentiality concerns. Motivated by this practical feature, we propose a novel nonparametric density estimation method from tabulated summary data based on maximum entropy and prove its strong uniform consistency. Unlike existing kernel-based estimators, our estimator is free from tuning parameters and admits a closed-form density that is convenient for post-estimation analysis. We apply the proposed method to the tabulated summary data of the U.S. tax returns to estimate the income distribution.

\medskip
\noindent
\textbf{Keywords:} grouped data, income distribution, maximum entropy

\medskip
\noindent
\textbf{JEL codes:} C14, D31
\end{abstract}

\newpage

\section{Introduction}\label{sec:Intro}

Researchers can often access a tabulated summary of data more easily than the original data containing confidential information at the individual level. Examples include administrative tax data containing detailed information about individual-level records of income.
Tax authorities often release summary statistics of the income distributions in a tabulated format, such as the number of taxpayers and their average income, grouped by bins of income levels.
Despite their lack of details, such tabulated summary data are still useful for researchers in analyzing income distributions, especially for historically old times for which micro data are no longer available.

A typical econometric method for estimating the cross-sectional distribution of a continuous random variable, such as the kernel density estimator and the empirical cumulative distribution function, requires individual-level information, and thus is not suitable when researchers only have access to tabulated summaries.
In this paper, we propose a novel method for nonparametrically estimating the probability density function of an absolutely continuous distribution from tabulated summary data based on the maximum entropy (ME) principle.
This ME density estimator enjoys a number of desirable properties.
First, it is piecewise exponential, which allows for analytical post-estimation integration to calculate cumulative distribution estimates such as the mean, variance, top income shares, Lorenz curve, and the Gini coefficient.
Second, and more importantly, unlike alternative kernel-based estimators \citep{BlowerKelsall2002,Sun2014}, the ME density estimator is free from tuning parameters such as the bandwidth.
This feature is attractive in practice and provides a complete theoretical justification of the method, unlike the existing kernel-based methods for which the theory fails to formally account for the tuning parameter choice in practice.
We establish the strong uniform consistency for the ME density and cumulative distribution estimators under the asymptotic framework in which the resolution of bins in the table becomes finer at certain rates as the underlying sample size increases.

To illustrate our proposed method, we consider two applications. The first is a simulation study of the calculation of top income shares. Compared to the existing methods, our proposed method generates smaller bias and root mean squared error (RMSE). The second is the estimation of the income distribution and top income shares from the tabulated summaries of the U.S. tax returns data. Unlike the popular Pareto interpolation method of \cite{Piketty2003}, which relies on parametric assumptions in the upper tail, our method allows for the estimation in mid-sample as well as in the tails without parametric assumptions.

\paragraph{Related literature}
Our paper is related to a large literature in economics on inequality measures as well as in statistics and econometrics of estimation with grouped data.

Applied researchers working with income inequality measures have long been studying the interpolation problem from grouped data; see for instance \cite{CowellMehta1982} for an early review. The parametric methods for estimating the Lorenz curve (and hence computing the Gini coefficient) of \cite{KakwaniPodder1976} and \cite{VillasenorArnold1989} have been used by the World Bank. \cite{FeenbergPoterba1993} and \cite{Piketty2003} interpolate the upper tail of the income distribution by the Pareto distribution with local Pareto exponents estimated from a tabulated summary. See \cite{PikettySaez2003} for an application to the U.S. income distribution. More recently, \cite{BlanchetFournierPiketty2022} apply spline interpolation to what they call the inverted Pareto coefficients. Compared to this applied literature, our approach enjoys several advantages such as that
\begin{enumerate*}
\item the method is supported by the rigorous econometric theory that guarantees the performance in large samples,
\item the method is nonparametric across the entire distribution of income, and
\item the output is a closed-form density that is analytically tractable and convenient for post-estimation analysis.
\end{enumerate*}

In terms of estimating the income distribution with tabulated data, our proposed method differs from the existing methods, which can be categorized into parametric and nonparametric ones. 
For the former, \citet{Hajargasht2012} assume a parametric income distribution and develop a generalized method of moment (GMM) estimator for the unknown parameters. 
See also \citet{Chen2018} and \citet{HajargashtGriffiths2020} for other GMM methods relying on parametric assumptions.  
Despite their good fit in some data sets \citep[e.g.,][]{Jorda2021}, parametric methods in general suffer from model misspecifications. 
Specifically, although the estimator of the coefficients in the parametric model might still converge to some pseudo true values, the implied density estimator is in general inconsistent. 
The theoretical effect of misspecification on further estimations of other features, say moments and top income shares, are unknown. 
They may exhibit large biases in finite samples, as we show by simulation studies in Section \ref{sec:Simulation}.  
In contrast, our estimator is nonparametric and relatively robust to parametric assumptions. 

For nonparametric estimation with binned/grouped data, \citet{ScottSheather1985} study the kernel density estimator when the observations are equally spaced. 
On the other hand, we do not require data to be equally spaced.
Without the equal spacing restriction, \cite{BlowerKelsall2002} propose an alternative estimator with Gaussian kernel function, which \cite{Sun2014} extends by allowing for other kernel functions. 
These kernel approaches require a bandwidth as a tuning parameter, whose choice is challenging in practice especially under the nonstandard sampling setup of binned/grouped/tabulated data.
In contrast, our proposed ME estimator is free from tuning parameters and therefore more attractive in practice.
Furthermore, this tuning-parameter-free feature of our proposed method provides a complete theoretical justification under weak regularity condition. 
\cite{Reyes2016} derive the orders of magnitude of the bias and the variance of \citeauthor{ScottSheather1985}'s estimator under very strong conditions:
\begin{enumerate*}
\item the bin size is of a smaller order than $h^2$, where $h$ denotes the bandwidth of the kernel estimator and
\item the underlying distribution function is seven-times differentiable with bounded derivatives.
\end{enumerate*} 
The first condition implies that the group structure is negligible and hence the bias and the variance are asymptotically the same as in the case with individual observations. 
The second condition is strong and rules out some candidate distributions, such as double Pareto, which is used in empirical studies of income distribution.  
Unlike the existing kernel-based methods, our ME estimator only requires the underlying density to be Lipschitz continuous and is based on minimizing the Kullback-Leibler divergence, which lead to both numerical and theoretical advantages. See Section \ref{sec:main} for details. 

Finally, our paper is related to the large literature that applies the maximum entropy (ME) principle. Historically, the ME principle was developed in physics to infer the population distribution (\eg, energy distribution of gas molecules) from macroscopic variables (\eg, temperature); see \cite{jaynes1957a}.  In economics, applications of the ME method include general equilibrium theory \citep{foley1994,Toda2010ET,Toda2015ET}, diagnosis of asset pricing models \citep{stutzer1995}, derivative pricing \citep{stutzer1996}, discretization of probability distributions and stochastic processes \citep{TanakaToda2013EL,TanakaToda2015SINUM,FarmerToda2017QE}, among others. In this paper, we apply the ME principle as a tool to impose moment restrictions implied by the tabulated summary data. Applications in econometrics include \cite{kitamura-stutzer1997} and \cite{Wu2003}, among others. 
To our knowledge, none of the existing ME methods work for tabulated summary data.

\paragraph{Organization of the paper} 
Section \ref{sec:TD} introduces the general data framework and previews the U.S. tax return data set used in the application. Section \ref{sec:main} introduces the nonparametric density estimator and proves its strong uniform consistency. Section \ref{sec:Simulation} presents simulation studies. Section \ref{sec:Income} applies the proposed method to the U.S. tax return data set.

\section{Tabulated summary data}\label{sec:TD}
\subsection{General data framework}\label{sec:TD:general}

Consider the latent sample $\set{Y_i}_{i=1}^n$, which is not directly observed by the researcher. 
Denote the order statistics in descending order by $\set{Y_{(i)}}_{i=1}^n$ such that
\begin{equation*}
Y_{(1)}\ge Y_{(2)}\ge \dots\ge Y_{(n)}.
\end{equation*}
For each $m \in \set{1,\dots, n}$, define the partial sum of top $m$ order statistics
\begin{equation}
S_m\coloneqq \sum_{i=1}^m Y_{(i)}. \label{eq:partial_sum}
\end{equation}
Consider a positive number $K$ of bins, where $\set{t_k}_{k=0}^K$ denotes the sequence of bin threshold values such that
\begin{equation*}
\infty=t_0>t_1>t_2>\dots>t_K.
\end{equation*}
Let $n_k$ be the number of order statistics included in the top $k$ bins, that is, $n_k\coloneqq \#\set{i:Y_{(i)} \ge t_k}$. The tabulated data are summarized as $\set{(t_k,n_k,S_{n_k})}_{k=1}^K$, which is observed by the researcher.

\subsection{Example: summary of income data by tax authorities}\label{sec:TD:income}

As a concrete example of the data framework just described, consider the latent sample $\set{Y_i}_{i=1}^n$ of the values $Y_i$ of income, where $i$ indexes potential taxpayers and $n$ is the sample size. Due to confidentiality concerns, in general there is no public access to administrative data of income. Publicly available data on the income distribution released from tax authorities often take the form of the tabulated summary $\set{(t_k,n_k,S_{n_k})}_{k=1}^K$.

Table \ref{t:IRS2019} presents an example data set from the 2019 U.S.\ tax returns.\footnote{\label{fn:IRS}Table \ref{t:IRS2019} shows partial information from Internal Revenue Service, Statistics of Income (SOI) Individual Income Tax Returns Publication 1304 (\url{https://www.irs.gov/statistics/soi-tax-stats-individual-income-tax-returns-complete-report-publication-1304}), Table 1.4 under ``Basic Tables''. Adjusted gross income (AGI) is AGI less deficit. We omit the row corresponding to negative income.} 
In this example, the number of income groups is $K=18$. 
Column (1) shows the lower threshold $t_k$ of adjusted gross income (AGI) for each income group $k \in \set{1,\dots,K}$. 
Column (2) shows the number of taxpayers within each income group $k$, which corresponds to $n_k-n_{k-1}$ in our notations, where we set $n_0=0$ by convention. 
Column (3) shows the total income (AGI) accruing to taxpayers in each income group $k$ in units of 1,000 U.S.\ dollars, which corresponds to $(S_{n_k}-S_{n_{k-1}})/1{,}000$ in our notations, where we set $S_0=0$ by convention. 
We thus observe the tabulated summary data $\set{(t_k,n_k,S_{n_k})}_{k=1}^K$ of AGI, where $K=18$.

\begin{table}[!htb]
\centering
\caption{Income distribution in the United States, 2019.}\label{t:IRS2019}
\begin{tabular}{rrrr}
\toprule
\multicolumn2{c}{Income group} & \multicolumn2{c}{Adjusted gross income (AGI)} \\
& \multicolumn1{c}{(1)} & \multicolumn1{c}{(2)} & \multicolumn1{c}{(3)} \\
$k$ & AGI threshold & \# returns & Total income \\
\cmidrule(lr){1-2}
\cmidrule(lr){3-4}
18 & \$1 &	9,866,880	&	24,439,988 \\
17 & \$5,000 &	9,925,940	&	74,584,857 \\
16 & \$10,000 &	11,087,737	&	138,230,399	\\
15 & \$15,000 &	10,039,446	&	175,255,963	\\
14 & \$20,000 &	9,493,968	&	213,660,160	\\
13 & \$25,000 &	9,289,939	&	254,877,708	\\
12 & \$30,000 &	16,090,602	&	560,073,192	\\
11 & \$40,000 &	12,503,041	&	560,258,808	\\
10 & \$50,000 &	22,238,948	&	1,366,892,948 \\
9 & \$75,000 &	14,118,568	&	1,222,947,425 \\
8 & \$100,000 &	21,997,582	&	3,004,363,636 \\
7 & \$200,000 &	7,297,883	&	2,090,808,696 \\
6 & \$500,000 &	1,162,371	&	781,920,814	\\
5 & \$1,000,000 &	254,197	&	305,561,848	\\
4 & \$1,500,000 &	103,075	&	176,961,208	\\
3 & \$2,000,000 &	143,514	&	425,088,995	\\
2 & \$5,000,000 &	34,738	&	237,781,553	\\
1 & \$10,000,000 &	20,876	&	590,230,011	\\
\bottomrule
\end{tabular}
\caption*{\footnotesize Note: ``AGI threshold'' is the lower threshold of adjusted gross income (AGI) that defines the income groups. ``\# returns'' is the number of tax returns with income weakly above the lower AGI threshold and strictly below the upper AGI threshold. ``Total income'' is the total income (in units of 1,000 U.S.\ dollars) accruing to taxpayers in each income group.}
\end{table}

\section{Main results}\label{sec:main}

This section investigates a method to characterize a well-behaved density function of the distribution of $Y$ from the tabulated summary data $\set{(t_k,n_k,S_{n_k})}_{k=1}^K$ introduced in Section \ref{sec:TD}. We first consider the case when the sample size is infinite and there are no sampling errors in bin probabilities and conditional means. We next propose a feasible estimator and study its asymptotic properties as the sample size tends to infinity.

\subsection{Maximum entropy density}\label{subsec:ME}

Let $F$ denote the true cumulative distribution function (CDF) of $Y$, which is assumed to be absolutely continuous with probability density function denoted by $f = F'$. 
Suppose that the thresholds satisfy
\begin{equation*}
\infty=t_0>t_1>t_2>\dots>t_K \eqqcolon \ubar{t},
\end{equation*}
and let $I_k\coloneqq [t_k,t_{k-1})$ denote the interval for the top $k$-th bin with top fractile denoted by $p_k=\Pr(Y\ge t_k)=1-F(t_k)$. 
For each $k$, the bin probability and conditional mean are defined by
\begin{subequations}\label{eq:qyk}
\begin{align}
q_k&\coloneqq \Pr(Y \in I_k)=\int \id_{I_k}(y)\diff F(y)=p_k-p_{k-1}, \label{eq:qk}\\
y_k&\coloneqq \E[Y \mid Y \in I_k]=\frac{1}{q_k}\int \id_{I_k}(y)y\diff F(y),\label{eq:yk}
\end{align}
\end{subequations}
where $\id_{I}(\cdot)$ denotes the indicator function indicating that the argument belongs to the set $I$.

Obviously, given only the finite tabulation $\set{(p_k,y_k)}_{k=1}^K$, we do not have sufficient moment restrictions to pin down the true density function $f$.  The maximum entropy (ME) method is useful when only certain moment conditions are given.  In our context of characterizing the distribution of $Y$ from a tabulation, we can proceed as follows.

Letting $g$ denote a generic density, the given moment conditions consistent with \eqref{eq:qyk} are
\begin{subequations}\label{eq:mcond}
\begin{align}
\int \id_{I_k}(y)g(y)\diff y&=q_k, \label{eq:mcond_prob}\\
\int \id_{I_k}(y)yg(y)\diff y&=q_ky_k \label{eq:mcond_mean}
\end{align}
\end{subequations}
for each $k \in \set{1,\dots,K}$. 
The ME density $f^*$ is defined by the density $g$ on $I\coloneqq [\ubar{t},\infty)$ that minimizes the Kullback-Leibler divergence (with respect to the improper uniform density)
\begin{equation}
\int_I g(y)\log g(y)\diff y \label{eq:KL}
\end{equation}
subject to the moment restrictions \eqref{eq:mcond}. 
Below, we let $L_+^1(I)$ denote the equivalence class (identified by the $L^1$ norm) of nonnegative, measurable, and integrable functions $g:I\to \R$.
The following proposition characterizes the solution to the ME problem.

\begin{prop}\label{prop:ME_density}
Let $\infty=t_0>t_1>\dots>t_K$ be the thresholds and $y_k\in (t_k,t_{k-1})$ be the average of $Y$ in group $k$. Then the function
\begin{equation}
J_k(\lambda;t)\coloneqq
\begin{cases}
y_k\lambda-\log\left(\frac{\e^{\lambda t_{k-1}}-\e^{\lambda t_k}}{\lambda}\right) & (\lambda\neq 0)\\
-\log \left(t_{k-1}-t_k\right) & (\lambda=0)
\end{cases} \label{eq:Jk}
\end{equation}
is strictly concave in $\lambda\in \R$ and achieves a unique maximum $\lambda_k^*$, which satisfies
\begin{equation}
\lambda_1^*=-\frac{1}{y_1-t_1}<0\quad \text{and}\quad 
\lambda_k^*\begin{cases*}
<0 & if $y_k<\frac{t_k+t_{k-1}}2$,\\
=0 & if $y_k=\frac{t_k+t_{k-1}}2$,\\
>0 & if $y_k>\frac{t_k+t_{k-1}}2$,
\end{cases*}
\quad \text{for $k\ge 2$}.\label{eq:lambdak_sign}
\end{equation}
The ME problem has a unique solution $f^*\in L_+^1(I)$, which is piecewise exponential and satisfies
\begin{equation}
f^*(y)=\begin{cases}
\frac{q_k\lambda_k^*\e^{\lambda_k^* y}}{\e^{\lambda_k^*t_{k-1}}-\e^{\lambda_k^*t_k}} & (\lambda_k^*\neq 0)\\
\frac{q_k}{t_{k-1}-t_k} & (\lambda_k^*=0)
\end{cases}\label{eq:fstar}
\end{equation}
for $y\in I_k=[t_k,t_{k-1})$. The minimum value of the Kullback-Leibler divergence \eqref{eq:KL} is given by
\begin{equation}
J^*(t_1,\dots,t_K)\coloneqq \sum_{k=1}^K q_k(J_k(\lambda_k^*;t)+\log q_k). \label{eq:Jstar}
\end{equation}
\end{prop}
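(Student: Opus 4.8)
The plan is to exploit that the constraints \eqref{eq:mcond} involve only the functions $\id_{I_k}$ and $y\id_{I_k}$ and that $\set{I_k}_{k=1}^K$ partitions $I$, so the ME problem separates across bins. Writing a feasible $g$ as $g=q_kh_k$ on $I_k$ with $h_k$ a probability density on $I_k$ gives $\int_I g\log g=\sum_{k=1}^K q_k\log q_k+\sum_{k=1}^K q_k\int_{I_k}h_k\log h_k$, so minimizing \eqref{eq:KL} subject to \eqref{eq:mcond} reduces to $K$ independent classical maximum-entropy problems: for each $k$, minimize $\int_{I_k}h\log h$ over densities $h$ on $I_k$ with mean $y_k$. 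The per-bin optimizer is the exponential tilt $h_k(y)\propto\e^{\lambda_k^* y}$, and reassembling and renormalizing yields \eqref{eq:fstar} while summing the per-bin optimal values yields \eqref{eq:Jstar}; the real content is the analysis of $J_k$, which is the Lagrangian dual objective of the $k$-th sub-problem, together with a clean optimality and uniqueness argument.

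First I would study $J_k$. Observe that $J_k(\lambda;t)=y_k\lambda-A_k(\lambda)$, where $A_k(\lambda)\coloneqq\log\int_{I_k}\e^{\lambda y}\diff y$ is the log-partition function of the uniform base measure on $I_k$; for $k=1$ one has $I_1=[t_1,\infty)$ and $A_1(\lambda)$ is finite only for $\lambda<0$, so $J_1$ equals $-\infty$ (hence is trivially concave) for $\lambda\ge0$ and is a proper concave function on $(-\infty,0)$. Standard log-partition calculus gives $A_k'(\lambda)=\E_\lambda[Y]$ and $A_k''(\lambda)=\operatorname{Var}_\lambda(Y)>0$, where expectations are under the tilted density $\propto\e^{\lambda y}$ on $I_k$ (a nondegenerate measure), so $J_k$ is strictly concave. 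Coercivity on the effective domain follows from boundary behavior: for a bounded bin, $A_k(\lambda)\sim\lambda t_{k-1}-\log\lambda$ as $\lambda\to+\infty$ and $A_k(\lambda)\sim\lambda t_k-\log(-\lambda)$ as $\lambda\to-\infty$, so $J_k(\lambda;t)\to-\infty$ at both ends precisely because $y_k\in(t_k,t_{k-1})$; for $k=1$, $A_1(\lambda)=\lambda t_1-\log(-\lambda)$ on $(-\infty,0)$, so $J_1(\lambda;t)=\lambda(y_1-t_1)+\log(-\lambda)\to-\infty$ as $\lambda\to0^-$ and as $\lambda\to-\infty$. Hence a unique interior maximizer $\lambda_k^*$ exists, characterized by the first-order condition $A_k'(\lambda_k^*)=y_k$; for $k=1$ this reads $t_1-1/\lambda_1^*=y_1$, i.e.\ $\lambda_1^*=-1/(y_1-t_1)<0$, and for $k\ge2$, since $A_k'$ is strictly increasing with $A_k'(0)=(t_k+t_{k-1})/2$ and $J_k'(\lambda)=y_k-A_k'(\lambda)$ is strictly decreasing with its unique zero at $\lambda_k^*$, comparing $y_k$ with $(t_k+t_{k-1})/2$ gives the sign rule \eqref{eq:lambdak_sign}.

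Next I would verify that \eqref{eq:fstar} is feasible and optimal. Feasibility: $\int_{I_k}f^*=q_k$ is immediate from the normalization, and $\int_{I_k}yf^*=q_kA_k'(\lambda_k^*)=q_ky_k$ by the first-order condition. Optimality and uniqueness follow from the information (Gibbs) inequality: for any feasible $g$ with $\int_I g\log g<\infty$,
\begin{equation*}
\int_I g\log g-\int_I f^*\log f^*=\int_I g\log\frac{g}{f^*}+\int_I(g-f^*)\log f^*,
\end{equation*}
and on each $I_k$ the function $\log f^*$ is an affine function of $y$, so $\int_{I_k}(g-f^*)\log f^*$ is a linear combination of $\int_{I_k}(g-f^*)\diff y=0$ and $\int_{I_k}(g-f^*)y\diff y=0$ (both $g$ and $f^*$ satisfy \eqref{eq:mcond}); hence the second term on the right vanishes and the difference equals the Kullback-Leibler divergence $\int_I g\log(g/f^*)\ge0$, with equality iff $g=f^*$ almost everywhere, so $f^*$ is the unique minimizer in $L_+^1(I)$. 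Finally, substituting \eqref{eq:fstar} into \eqref{eq:KL} with $\log f^*(y)=\log q_k+\lambda_k^* y-\log\bigl((\e^{\lambda_k^* t_{k-1}}-\e^{\lambda_k^* t_k})/\lambda_k^*\bigr)$ and using $\int_{I_k}yf^*=q_ky_k$ gives $\int_{I_k}f^*\log f^*=q_k(J_k(\lambda_k^*;t)+\log q_k)$, and summing over $k$ yields \eqref{eq:Jstar}.

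The step I expect to be the main obstacle is the careful treatment of the unbounded first bin $I_1=[t_1,\infty)$: its log-partition function is finite only on the half-line $\lambda<0$, so one must argue that the constraint does not bind (the maximizer is interior), that the statement ``strictly concave in $\lambda\in\R$'' is to be read in the extended-real-valued sense, and that the separation and Gibbs-inequality arguments go through with the relevant integrals finite (in particular $\int_{I_1}yg\,\diff y=q_1y_1<\infty$ for feasible $g$). The remaining pieces — strict concavity and coercivity of $J_k$ for bounded bins, feasibility of $f^*$, and the value computation — are routine once this framework is in place.
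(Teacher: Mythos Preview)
Your proposal is correct and complete, but it is organized differently from the paper's proof. The paper proceeds via Fenchel duality \`a la Borwein--Lewis: it writes the full Lagrangian with multipliers $(\mu_k,\lambda_k)$, minimizes pointwise over $g$ to obtain the dual $J(\mu,\lambda)$, eliminates $\mu$ analytically, and then maximizes $J(\lambda)=\sum_k q_k(J_k(\lambda_k)+\log q_k)$; strict concavity of $J_k$ is argued via H\"older's inequality applied to $\lambda\mapsto\log\int_{I_k}\e^{\lambda y}\diff y$, and optimality of $f^*$ is implicitly inherited from the duality framework. By contrast, you first \emph{separate} the primal problem across bins using the decomposition $\int g\log g=\sum_k q_k\log q_k+\sum_k q_k\int_{I_k}h_k\log h_k$, then treat each bin as a classical one-constraint ME problem, obtain strict concavity from $A_k''=\operatorname{Var}_\lambda(Y)>0$, and establish optimality and uniqueness directly via the Gibbs inequality using that $\log f^*$ is affine on each $I_k$. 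Your route is more elementary and fully self-contained---it avoids citing a duality theorem and gives an explicit uniqueness argument the paper leaves to the Borwein--Lewis reference---while the paper's route is the conventional ME-duality presentation and makes the appearance of $J_k$ as a dual object more transparent. Your explicit flagging of the half-line bin $I_1$ (finite log-partition only for $\lambda<0$, extended-real concavity, finiteness of $\int_{I_1} yg$) is also more careful than the paper, which handles $k=1$ in one line.
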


\subsection{Estimation}\label{subsec:estim}

This section proposes a feasible analog of the ME density characterized in Section \ref{subsec:ME} for estimation of the true density function $f$ of $Y$. We then establish its strong uniform consistency.

We construct a feasible ME density estimator by replacing $q_k$ and $y_k$ with their empirical analogs $\hat{q}_k=(n_k-n_{k-1})/n$  and $\hat{y}_k=(S_{n_k}-S_{n_{k-1}})/(n_k-n_{k-1})$, respectively. 
Letting $t \coloneqq \set{t_k}^K_{k=1}$ denote the vector of thresholds, we thus define the sample-analog ME estimator $\hat{f}$ of $f$ as the solution to the constrained optimization prblem of minimizing \eqref{eq:KL} subject to \eqref{eq:mcond} with $\hat{Q}_k$ and $\hat y_k$ in place of  $q_k$ and $y_k$, respectively.

We now establish the almost sure uniform consistency of $\hat{f}$ for the true density function $f$ over any compact subset of the domain of $f$ as $n\to\infty$. To this end, consider the following conditions.

\begin{asmp}\label{asmp:primitive}
The following conditions hold.
\begin{enumerate}
\item \label{cond:iid} $\set{Y_i}_{i=1}^n$ is \iid with density $f$.
\item \label{cond:f} The density $f:\R \to [0,\infty)$ is Lipschitz continuous with constant $L>0$, that is, for all $y_1,y_2\in \R$ we have
\begin{equation*}
	\abs{f(y_1)-f(y_2)}\le L\abs{y_1-y_2}.
\end{equation*}
\item \label{cond:KN} There exist some constants $c_1,c_2>0$ and $0<r_2\le r_1<1/4$ such that
\begin{equation}
c_1n^{-r_1}\le \min_{2\le k\le K}\set{t_{k-1}-t_k}\le \max_{2\le k\le K}\set{t_{k-1}-t_k}\le c_2n^{-r_2}. \label{eq:tdiff_order}
\end{equation}
Furthermore, letting $D\coloneqq \set{y\in \R: f(y)>0}$  be the domain of $f$, we have $\limsup_{n\to\infty} t_K\le \inf D$ and $\liminf_{n\to\infty} t_1\ge \sup D$.
\end{enumerate}
\end{asmp}

Condition \ref{cond:iid} assumes a random sample. 
Condition \ref{cond:f} implies that $f$ is almost everywhere differentiable with a bounded derivative, which is much weaker than typical assumptions on kernel estimators that require high-order smoothness conditions. 
Condition \ref{cond:KN} requires that the length of any bin is neither too large nor too small, as well as that the interval $[\limsup_{n \to \infty}t_K, \liminf_{n \to \infty}t_1]$ covers the domain $D$ of $f$.  
With these conditions, the following theorem establishes the strong uniform consistency of the maximum entropy density estimator $\hat{f}$ for the true density function $f$ over any compact subset of the domain and also bounds the convergence rate. The proof is non-trivial and deferred to Appendix \ref{sec:proof}.

\begin{thm}\label{thm:consistent}
Suppose that Assumption \ref{asmp:primitive} holds and let $C\subset D$ be compact. Then, as $n\to \infty$, we have
\begin{subequations}
\begin{align}
\sup_{y \in C} \abs{\hat{f}(y) - f(y)} &\asto 0, \label{eq:consistent}\\
\sup_{y \in C} \abs{\hat{f}(y) - f(y)} &= \underbrace{O(n^{-r_2})}_\text{deterministic}+\underbrace{O_p\left(n^{-\frac{1-4r_1}{2}}\right)}_\text{stochastic}. \label{eq:rate}
\end{align}
\end{subequations}
\end{thm}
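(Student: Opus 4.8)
The plan is to interpose the infeasible population ME density $f^*$ --- the object of Proposition~\ref{prop:ME_density} built from the \emph{true} moments $(q_k,y_k)$ at the prevailing thresholds --- and to bound
\[
\sup_{y\in C}\bigl|\hat f(y)-f(y)\bigr|\ \le\ \sup_{y\in C}\bigl|\hat f(y)-f^*(y)\bigr|\ +\ \sup_{y\in C}\bigl|f^*(y)-f(y)\bigr|,
\]
treating the first (sampling) term and the second (deterministic approximation) term separately; since both $\hat f$ and $f^*$ are given by the closed form \eqref{eq:fstar} with, respectively, the empirical and the true moments, comparing them amounts to a sensitivity analysis of \eqref{eq:fstar}. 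I first record some preliminaries used throughout. As $C$ is compact and $D$ open, fix a slightly larger compact set $C'$ with $C$ in its interior and $C'\subset D$; because the bin widths vanish by \eqref{eq:tdiff_order} and $\limsup_{n\to\infty} t_K\le\inf D$, $\liminf_{n\to\infty} t_1\ge\sup D$ by Condition~\ref{cond:KN}, for $n$ large every bin $I_k$ that meets $C$ lies in $C'$ and has \emph{finite} width $\delta_k\coloneqq t_{k-1}-t_k\in[c_1 n^{-r_1},c_2 n^{-r_2}]$ (in particular the unbounded top bin $I_1$ never matters), and on $C'$ one has $0<\ubar{f}\le f\le\bar f<\infty$, whence $q_k=\int_{I_k}f\in[\ubar{f}\,\delta_k,\bar f\,\delta_k]$ for such $k$. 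Finally, Lipschitz continuity of $f$ forces the conditional mean $y_k$ to lie within $O(\delta_k^2)$ of the midpoint of $I_k$ (expand $y_k$ around that midpoint: the first-order term cancels and the Lipschitz bound controls the remainder), so by analysing the stationarity condition for $\lambda_k^*$ in Proposition~\ref{prop:ME_density} --- which equates the mean of the exponential$(\lambda_k^*)$ law truncated to $I_k$ with $y_k$ --- the maximiser obeys $|\lambda_k^*|\,\delta_k=O(\delta_k)$, i.e.\ $|\lambda_k^*|=O(1)$ uniformly over the relevant bins; consequently the curvature of $J_k$ at $\lambda_k^*$ --- equal in magnitude to the variance of that truncated-exponential law --- is bounded above and below by constant multiples of $\delta_k^2$.

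\emph{Approximation error.} Both $f^*$ and $f$ integrate to $q_k$ over $I_k$, so each equals the average value $q_k/\delta_k$ at some point of $I_k$; since $f$ is $L$-Lipschitz it oscillates by at most $L\delta_k=O(n^{-r_2})$ across $I_k$, and since $|\lambda_k^*|\,\delta_k\to0$ the exponential piece $f^*$ oscillates by $O(n^{-r_2})$ there as well. A triangle inequality over the bins containing a given $y\in C$ then gives $\sup_{y\in C}|f^*(y)-f(y)|=O(n^{-r_2})$, the deterministic term in \eqref{eq:rate}.

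\emph{Sampling error.} I proceed in three steps. \emph{(i) Concentration of the moments.} The statistics $\hat q_k$ and $\hat y_k$ are a sample proportion and a sample mean of i.i.d.\ quantities valued in $I_k$; Bernstein's inequality with a union bound over the $K=O(n^{r_1})$ bins shows that, on an event whose complement has probability $o(1)$ --- indeed summable in $n$ once the thresholds are inflated by a slowly growing factor --- every relevant bin has $n_k-n_{k-1}$ of order $nq_k$, hence of order $n\delta_k\to\infty$ (so $\hat f$ is well defined there by Proposition~\ref{prop:ME_density}, with $\hat y_k\in(t_k,t_{k-1})$ a.s.), while $\max_k|\hat q_k-q_k|=O_p(n^{-1/2})$ and $\max_k|\hat y_k-y_k|=O_p(n^{-1/2})$. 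Borel--Cantelli upgrades these to almost-sure bounds, which propagate through the next two steps and the approximation bound to give \eqref{eq:consistent}. \emph{(ii) From moments to $\hat\lambda_k^*$.} Since $|\hat y_k-y_k|=o_p(\delta_k)$ (because $n\delta_k\to\infty$), $\hat\lambda_k^*$ also stays in the region where the curvature of $J_k$ is of exact order $\delta_k^2$, and the stationarity condition together with a mean value argument gives $|\hat\lambda_k^*-\lambda_k^*|=O\bigl(|\hat y_k-y_k|/\delta_k^2\bigr)=O_p\bigl(n^{-1/2}\delta_k^{-2}\bigr)=O_p\bigl(n^{-(1-4r_1)/2}\bigr)$, where the last step uses only $\delta_k\ge c_1 n^{-r_1}$. \emph{(iii) Propagation.} On the relevant region the closed form \eqref{eq:fstar}, viewed as a function of $(q_k,\lambda_k^*)$ with $y$ ranging over $I_k$, is Lipschitz, with the derivative in $q_k$ of order $\delta_k^{-1}$ and the derivative in $\lambda_k^*$ bounded (uniformly, since $\lambda_k^*\delta_k\to0$), so $\sup_{y\in I_k\cap C}|\hat f(y)-f^*(y)|=O\bigl(|\hat q_k-q_k|/\delta_k\bigr)+O\bigl(|\hat\lambda_k^*-\lambda_k^*|\bigr)=O_p\bigl(n^{-(1-2r_1)/2}\bigr)+O_p\bigl(n^{-(1-4r_1)/2}\bigr)$. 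Maximising over the $O(n^{r_1})$ relevant bins (already paid for by the union bound) gives $\sup_{y\in C}|\hat f(y)-f^*(y)|=O_p(n^{-(1-4r_1)/2})$, the stochastic term in \eqref{eq:rate}.

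Adding the two pieces and using $r_1<1/4$ and $r_2>0$ yields \eqref{eq:consistent} and \eqref{eq:rate}. The main obstacle is steps~(ii)--(iii): the curvature of the dual objective $J_k$ at $\lambda_k^*$ degenerates at the rate $\delta_k^2\to0$, so a moment error of size $n^{-1/2}$ is amplified by a factor $\delta_k^{-2}\le c_1^{-2}n^{2r_1}$ before it reaches $\hat\lambda_k^*$ and hence $\hat f$; making this rigorous requires (a) the uniform two-sided bound on that curvature by constant multiples of $\delta_k^2$, which itself rests on the preliminary observation that Lipschitz continuity keeps $y_k$ --- and, on the good event, $\hat y_k$ --- close to the midpoint of $I_k$ so that $\lambda_k^*$ and $\hat\lambda_k^*$ stay bounded, and (b) checking that the union bound over the growing number $K=O(n^{r_1})$ of bins costs only a logarithmic factor and leaves the stated rate intact. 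The remaining pieces reduce to standard concentration and triangle-inequality estimates.
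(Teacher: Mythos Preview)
Your proposal is correct and reaches the same rates, but the route differs from the paper's in two respects: the decomposition and the control of the Lagrange multiplier. You interpose the infeasible population ME density $f^*$ and run a sensitivity analysis of the closed form \eqref{eq:fstar} in $(q_k,\lambda_k^*)$, using a curvature/implicit-function argument to pass from moment errors to $|\hat\lambda_k^*-\lambda_k^*|$. The paper never introduces $f^*$; it compares $\hat f$ directly to $f$ through their bin averages (mean-value theorem), writing $|\hat f(y)-f(y)|\le |\hat f(y)-\hat q_k/\delta_k|+|\hat q_k-q_k|/\delta_k+|q_k/\delta_k-f(y)|$, and bounds the first piece by the within-bin oscillation $\hat q_k|\hat\lambda_k|$, controlling $|\hat\lambda_k|\delta_k$ via an explicit inversion formula $\hat\lambda_k=(2/\delta_k)\,\phi^{-1}\!\bigl(2(\hat y_k-\bar t_k)/\delta_k\bigr)$ with $\phi(x)=(\cosh x)/(\sinh x)-1/x$. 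The paper's route is slightly shorter---no need to differentiate \eqref{eq:fstar} in $\lambda$ or to compare $\hat\lambda_k$ with the infeasible $\lambda_k^*$---and it handles the concentration of $\hat y_k$ more cleanly by working with the unconditional average $\hat q_k\hat y_k=\frac1n\sum_i Y_i\id_{I_k}(Y_i)$ and Hoeffding, rather than with the conditional mean $\hat y_k$ whose sample size is random. Your route is more modular, cleanly separating approximation from sampling error, and your curvature observation (that $-J_k''$ equals the variance of the truncated exponential, hence of exact order $\delta_k^2$) is precisely what underlies the paper's $\phi$-lemma. One small point to tighten: your statement that $\hat y_k$ is ``a sample mean of i.i.d.\ quantities valued in $I_k$'' glosses over the random sample size; the paper's device of passing through $\hat q_k\hat y_k$ sidesteps this.
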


A few remarks are in order regarding this result on the convergence rate.
First, while the rate is decomposed into the the deterministic part and the stochastic part, we do not have a control over the trade-off between these two components in the absence of a tuning parameter.
This implies a drawback of the tuning parameter-free approach.
Second, the rate depends on the parameters $r_1$ and $r_2$ of bin lengths.
Slowly vanishing bin lengths (i.e., small $r_1$ and $r_2$) yield small variances at the expense of large biases.
Quickly vanishing bin lengths (i.e., large $r_1$ and $r_2$) yield small biases at the expense of large variances.
Third, suppose $r_1=r_2$ for simplicity.
Then, $1/6 < r_1=r_2 < 1/4$ implies that the stochastic part dominates, while $r_1=r_2 < 1/6$ implies that the deterministic part dominates.
In the latter case, the limit distribution has a biased center, and it is difficult to conduct statistical inference in general.
This is another limitation of the tuning parameter-free approach. 

\cite{Reyes2016} derive the convergence rate of the kernel estimator proposed by \cite{ScottSheather1985}. 
\cite{Reyes2016} assume that the bin lengths are $o(h_n^2)$, where $h_n$ denotes the bandwidth satisfying $h_n\to 0$ and $nh_n \to\infty$ as $n\to\infty$. 
This condition implies that the group/bin structure is asymptotically negligible, and the resulting orders, $O(h_n^2)$ and $O_p(1/\sqrt{nh_n})$, of the deterministic and stochastic parts, respectively, are the same as those in the standard case with individual observations. 
On the one hand, choosing a certain bandwidth $h_n$ could lead to a smaller bias or variance than our ME estimator. 
On the other hand, the assumption that the bin lengths are $o(h_n^2)$ is very restrictive and could be violated in empirical studies where the bins are not too small. 
Furthermore, \cite{Reyes2016} assume that the underlying distribution function is seven-times differentiable with bounded derivatives. 
In contrast, our ME estimator only requires Lipschitz continuity for $f$, which is another advantage. 

Having established the strong uniform consistency of the density, it is straightforward to establish the same for the cumulative distribution function (CDF) and quantiles. Define the estimator $\hat{F}$ of $F$ by
\begin{equation*}
\hat{F}(y) = \int_{-\infty}^y \hat{f}(y)\diff y.
\end{equation*}
The following corollary shows the strong uniform consistency of $\hat{F}$.

\begin{cor}\label{cor:cdf}
Suppose that Assumption \ref{asmp:primitive} holds and let $C\subset D$ be compact.  If $-\infty < \inf D$, then,
as $n\to\infty$, we have
\begin{equation*}
\sup_{y \in C} \abs{\hat{F}(y) - F(y)} \asto 0.
\end{equation*}
\end{cor}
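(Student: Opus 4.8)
The plan is to deduce uniform consistency of $\hat F$ on a compact set $C\subset D$ from the uniform consistency of $\hat f$ established in Theorem \ref{thm:consistent}, by controlling the integral defining $\hat F(y)-F(y)$ over two regions: a bounded region where we can apply the density bound directly, and an unbounded left tail where we must argue that the contribution is negligible. Fix $y\in C$ and write
\begin{equation*}
\hat F(y)-F(y)=\int_{-\infty}^{y}\bigl(\hat f(u)-f(u)\bigr)\diff u.
\end{equation*}
Since $C$ is compact and $C\subset D$ with $-\infty<\inf D$, we may choose a fixed real number $a<\inf D$ with $a\le\min C$; then on $(-\infty,a)$ we have $f\equiv 0$, and for $n$ large enough $t_K\ge a$ by the condition $\limsup_{n\to\infty}t_K\le\inf D$, so $\hat f$ (being supported on $[t_K,\infty)$) also vanishes on $(-\infty,a)$. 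Hence for all large $n$ the integral reduces to $\int_a^{y}(\hat f(u)-f(u))\diff u$, and
\begin{equation*}
\sup_{y\in C}\abs{\hat F(y)-F(y)}\le (b-a)\sup_{u\in[a,b]}\abs{\hat f(u)-f(u)},
\end{equation*}
where $b\coloneqq\max C<\infty$. The key point is that $[a,b]$ is itself a compact subset of (an open neighborhood containing) $D$, so Theorem \ref{thm:consistent} applies with $C$ replaced by $[a,b]$ (after enlarging slightly so that $[a,b]\subset D$ if necessary, or by noting that on $[a,\inf D)$ both $f$ and $\hat f$ vanish so only the part inside $D$ matters), giving $\sup_{u\in[a,b]}\abs{\hat f(u)-f(u)}\asto 0$. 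Multiplying by the fixed constant $b-a$ preserves almost-sure convergence, which yields the claim.

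The one subtlety to handle carefully is the left endpoint: Theorem \ref{thm:consistent} is stated for compact subsets of $D$, but we need to integrate over a region whose left part may stick out of $D$ on the low side (where $f=0$). The clean way around this is exactly the observation above — for large $n$, $\hat f$ is supported on $[t_K,\infty)$ with $t_K$ eventually below $\inf D$ only up to an $o(1)$ amount, and more to the point $\hat f$ restricted to $[t_K,\inf D)$ integrates to at most $\hat q_K'$ for the relevant boundary bins, which one can bound; but since $f$ also vanishes there, it suffices to show this tail mass of $\hat f$ is $o(1)$ almost surely. This follows because $\hat q_k\asto q_k$ for each fixed $k$ by the strong law of large numbers applied to $\id_{I_k}(Y_i)$, and for the bins straddling $\inf D$ the limiting probability $q_k\to 0$ as the bins shrink (condition \ref{cond:KN}), so the leaked mass vanishes. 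The main obstacle is thus bookkeeping rather than depth: one must make the "straddling bin" argument precise, ensuring that only finitely many bins near $\inf D$ contribute and that their total $\hat f$-mass is asymptotically negligible. Alternatively — and this is cleaner — one simply notes that $\int_{-\infty}^{b}\hat f = \hat F(b)$ and $\int_{-\infty}^{b}f=F(b)$, and since $\hat f$ integrates to $1$ by construction (it is a density), $\hat F(b)=1-\int_b^\infty\hat f$, so the whole problem reduces to the behavior of $\hat f$ on the compact set $[a,b]$ where Theorem \ref{thm:consistent} is directly in force, completing the argument.
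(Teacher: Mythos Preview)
The paper does not give an explicit proof of this corollary, treating it as a direct consequence of Theorem \ref{thm:consistent}. Your overall strategy---integrate $\hat f-f$ over a bounded interval and invoke the theorem---is the natural one and presumably what the authors intend.

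There is, however, a genuine gap in your handling of the left tail. The assertion ``for $n$ large enough $t_K\ge a$ by the condition $\limsup_{n\to\infty}t_K\le\inf D$'' is incorrect: that condition bounds $t_K$ from \emph{above} eventually (for every $\epsilon>0$, eventually $t_K<\inf D+\epsilon$), not from below, so $t_K$ may be arbitrarily negative. Hence you cannot conclude that $\hat f$ vanishes on $(-\infty,a)$, and since you chose $a<\inf D$, the interval $[a,b]$ is not contained in $D$ and Theorem \ref{thm:consistent} does not apply to it directly. Your closing alternative is also flawed: $\hat f$ need not integrate to $1$, since by construction $\int\hat f=\sum_{k}\hat q_k=n_K/n$, which is strictly less than $1$ whenever $t_K>\inf D$ (a possibility the assumption permits).

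Your middle paragraph---arguing that the mass of $\hat f$ below $\inf D$ comes only from at most one straddling bin whose $\hat q_k$ vanishes because bin widths shrink---is the right repair and can be made rigorous. A cleaner route is to pick $a\in D$ with $a<\min C$ (not $a<\inf D$), so that $[a,\max C]$ is a compact subset of $D$ and Theorem \ref{thm:consistent} controls $\int_a^y(\hat f-f)$ as you wanted; then handle $\hat F(a)-F(a)$ separately. At the threshold $t_k$ nearest $a$ one has $\hat F(t_k)=(n_K-n_k)/n$, and
\[
\abs{\hat F(t_k)-F(t_k)}\le \abs{1-n_K/n}+\sup_t\abs{F_n(t)-F(t)},
\]
where $F_n$ is the empirical CDF. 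The second term vanishes a.s.\ by Glivenko--Cantelli; the first vanishes a.s.\ because $1-n_K/n=F_n(t_K^-)$ and $F(t_K)\to 0$ (from $\limsup t_K\le\inf D$ and continuity of $F$). Since $\abs{\hat F(a)-\hat F(t_k)}\le\hat q_k\to 0$ and $\abs{F(a)-F(t_k)}\le q_k\to 0$ uniformly, this closes the argument.
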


Let $Q_\tau = \inf\set{ y : \tau \le F(y)}$ denote the $\tau$-th quantile of $F$.
Given $\hat{F}$, we can estimate $Q_\tau$ by the analog $\hat{Q}_\tau = \inf\set{ y : \tau \le \hat{F}(y)}$.
The following corollary shows that this quantile estimator is also consistent.

\begin{cor}\label{cor:quantile}
Suppose that Assumption \ref{asmp:primitive} holds and let $C\subset D$ be compact. If $-\infty < \inf D$ and $Q_\tau \in \interior C$, then, as $n\to \infty$, we have 
$\hat{Q}_\tau \asto Q_\tau$.
\end{cor}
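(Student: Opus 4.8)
The plan is to derive the consistency of $\hat Q_\tau$ from that of $\hat F$ (Corollary \ref{cor:cdf}) by a standard argument exploiting the monotonicity of $F$ near $Q_\tau$. First I would observe that since $Q_\tau\in\interior C\subset D$ and $f=F'$ is continuous with $f(Q_\tau)>0$ (because $Q_\tau$ lies in the domain $D$ where $f$ is positive), $F$ is strictly increasing in a neighborhood of $Q_\tau$. Fix an arbitrary $\varepsilon>0$ small enough that $[Q_\tau-\varepsilon,Q_\tau+\varepsilon]\subset\interior C$ and $F$ is strictly increasing on this interval. Strict monotonicity gives $F(Q_\tau-\varepsilon)<\tau<F(Q_\tau+\varepsilon)$; set $\delta\coloneqq\min\set{\tau-F(Q_\tau-\varepsilon),\,F(Q_\tau+\varepsilon)-\tau}>0$.

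Next I would use Corollary \ref{cor:cdf} to transfer these inequalities to $\hat F$. Since $\sup_{y\in C}\abs{\hat F(y)-F(y)}\asto 0$ and $Q_\tau\pm\varepsilon\in C$, almost surely for all large $n$ we have $\hat F(Q_\tau-\varepsilon)<\tau$ and $\hat F(Q_\tau+\varepsilon)>\tau$. The first inequality, together with the fact that $\hat F$ is a CDF (hence nondecreasing), implies $\hat Q_\tau=\inf\set{y:\tau\le\hat F(y)}\ge Q_\tau-\varepsilon$; the second implies $\hat Q_\tau\le Q_\tau+\varepsilon$. Hence $\abs{\hat Q_\tau-Q_\tau}\le\varepsilon$ almost surely for all large $n$, and since $\varepsilon>0$ was arbitrary, $\hat Q_\tau\asto Q_\tau$.

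One technical point worth checking carefully is that $\hat F$ is genuinely nondecreasing and that the set $\set{y:\tau\le\hat F(y)}$ is nonempty and closed from the left, so that the infimum defining $\hat Q_\tau$ behaves as expected: this follows because $\hat f\ge 0$ (it lies in $L_+^1$ by Proposition \ref{prop:ME_density}), so $\hat F$ is nondecreasing and continuous, and the condition $-\infty<\inf D$ together with $\liminf t_1\ge\sup D$ ensures $\hat F$ eventually reaches values above any $\tau<1$. The only mild obstacle is ensuring the neighborhood on which $F$ is strictly increasing exists; this is immediate from $f(Q_\tau)>0$ and continuity of $f$, since then $f>0$ on a whole neighborhood of $Q_\tau$. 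No new probabilistic input beyond Corollary \ref{cor:cdf} is needed, so the argument is short and deterministic given that corollary.
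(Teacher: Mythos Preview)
Your proposal is correct and follows essentially the same approach as the paper: both deduce quantile consistency from Corollary~\ref{cor:cdf} by using $f(Q_\tau)>0$ (hence strict monotonicity of $F$ near $Q_\tau$) to sandwich $\hat Q_\tau$ once $\hat F$ is uniformly close to $F$. Your presentation is in fact more streamlined than the paper's, which routes the same idea through an explicit lower bound $\underline f$ on $f$ and a longer chain of probability-set inclusions, but the underlying argument is the same.
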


\subsection{Discussion}\label{subsec:discussion}

In an early review of interpolation methods from grouped data of income, \cite{CowellMehta1982} list the following ten desirable properties (with slight rewording) that the hypothetical interpolated distribution should possess.
\begin{enumerate}[1]
\item\label{item:mcond} The bin probability and conditional mean ($q_k$ and $y_k$ in our notation in \eqref{eq:mcond}) of the estimated density $\hat{f}$ agree with the tabulated summary.
\item $\hat{f}\ge 0$.
\item $\hat{f}$ is continuous within any interval $I_k=[t_k,t_{k-1})$.
\item\label{item:cont} $\hat{f}$ is continuous.
\item\label{item:diff} $\hat{f}$ is differentiable.
\item $\lim_{y\to\infty}\hat{f}(y)=0$.
\item $\lim_{y\to\infty}\hat{f}'(y)=0$.
\item\label{item:turn} $\hat{f}$ has ``few'' turning points within each bin.
\item\label{item:range} The range of $\hat{f}$ is ``small'' on any given interval.
\item\label{item:closed} $\hat{f}$ admits a closed-form expression to compute inequality measures.
\end{enumerate}
In addition to these properties, we would like to add:
\begin{enumerate}[1,resume]
\item\label{item:conv} $\hat{f}$ converges to the true density $f$ as the sample size tends to infinity.
\end{enumerate}

The methods reviewed in \cite{CowellMehta1982} as well as those proposed thereafter satisfy only a few of these properties. For instance, the recent method of \cite{BlanchetFournierPiketty2022} does not satisfy \ref{item:turn} and \ref{item:range} (because it uses polynomial interpolation), \ref{item:closed} (because it interpolates the inverted Pareto coefficients, not the density), or \ref{item:conv} (they do not provide formal theorems).

In contrast, our ME density estimator $\hat{f}$ satisfies all properties except \ref{item:cont} and \ref{item:diff}. To see this, property \ref{item:mcond} holds by construction and \ref{item:conv} is established in Theorem \ref{thm:consistent}. All other properties (except \ref{item:cont} and \ref{item:diff}) hold because $\hat{f}$ is piecewise exponential explicitly given by \eqref{eq:fstar} and hence is nonnegative, continuously differentiable, and monotonic on each interval. Regarding properties \ref{item:cont} and \ref{item:diff}, they clearly hold except at bin thresholds.

To illustrate property \ref{item:closed} further, we present some integral formulas that are useful when computing the CDF and top income shares when the density is piecewise exponential. Consider the piecewise exponential density \eqref{eq:fstar}. To simplify the notation, let $q_k=q$, $\lambda_k^*=\lambda$, $t_k=a$, and $t_{k-1}=b$. Therefore, for $y\in [a,b)$, the density is
\begin{equation*}
f(y)=\begin{cases}
\frac{q\lambda\e^{\lambda y}}{\e^{\lambda b}-\e^{\lambda a}}, & (\lambda\neq 0)\\
\frac{q}{b-a}. & (\lambda=0)
\end{cases}
\end{equation*}
The counter CDF (tail probability) can be computed using
\begin{equation*}
\int_y^b f(y)\diff y=\begin{cases}
q\frac{\e^{\lambda b}-\e^{\lambda y}}{\e^{\lambda b}-\e^{\lambda a}}, & (\lambda\neq 0)\\
q\frac{b-y}{b-a}. & (\lambda=0)
\end{cases}
\end{equation*}
Applying integration by parts, the tail expectation can be computed using
\begin{align*}
\int_y^b yf(y)\diff y=\begin{cases}
\frac{q}{\e^{\lambda b}-\e^{\lambda a}}\left(b\e^{\lambda b}-y\e^{\lambda y}-\frac{1}{\lambda}(\e^{\lambda b}-\e^{\lambda y})\right),& (\lambda \neq 0)\\
\frac{q}{2}\frac{b^2-y^2}{b-a}. & (\lambda=0)
\end{cases}
\end{align*}
Putting all the pieces together, we obtain the following closed-form expressions for the CDF and tail expectation. (We assume $\lambda_k\neq 0$ for simplicity, and we use the notation $y\vee t=\max\set{y,t}$.)
\begin{align*}
&1-F(y)=\int_y^\infty f(y)\diff y=\sum_{k=1}^K\id(y< t_{k-1})q_k\frac{
\e^{\lambda_k t_{k-1}}-\e^{\lambda_k (y\vee t_k)}}{\e^{\lambda_k t_{k-1}}-\e^{\lambda_k t_k}},\\
&\int_y^\infty yf(y)\diff y
=\sum_{k=1}^K\id(y< t_{k-1})q_k\frac{(t_{k-1}-1/\lambda_k)\e^{\lambda_k t_{k-1}}-((y\vee t_k)-1/\lambda_k)\e^{\lambda_k(y\vee t_k)}}{\e^{\lambda_k t_{k-1}}-\e^{\lambda_k t_k}}.
\end{align*}

\section{Simulation studies}\label{sec:Simulation}

We conduct three simulation studies that examine the performance of our proposed ME method relative to existing methods.

\subsection{Density estimation}\label{subsec:sim_density}

We first present density estimates from one large simulation draw with the sample size comparable to those in our empirical data. 
We consider four typical models for income distribution, namely lognormal, gamma, Weibull, and double Pareto. In each case, we choose the scale parameter so that the population mean normalizes to 1. Table \ref{t:model} summarizes these distributions.

\begin{table}[!htb]
\centering${}$\\
\caption{Models of income distributions.}\label{t:model}
\begin{tabular}{lcll}
\toprule
Name & Density $f(y)$ & Normalization & Parameter(s) \\
\midrule
Lognormal & $\frac{1}{\sqrt{2\pi\sigma^2}}\e^{-\frac{(\log y-\mu)^2}{2\sigma^2}}\frac{1}{y}$ & $\mu=-\sigma^2/2$ & $\sigma=1.5$\\
Gamma & $\frac{b^a}{\Gamma(a)}y^{a-1}\e^{-by}$ &
$b=a$ & $a=0.5$\\
Weibull & $bk y^{k-1}\e^{-by^k}$ & $b=\Gamma(1+1/k)^k$ & $k=0.7$\\
Double Pareto & $\begin{cases}
\frac{\alpha\beta}{\alpha+\beta}M^{-\beta}y^{\beta-1} &(y\le M)\\
\frac{\alpha\beta}{\alpha+\beta}M^{\alpha}y^{-\alpha-1} &(y\ge M)
\end{cases}$ & $M=\frac{(\beta+1)(\alpha-1)}{\alpha\beta}$ & $(\alpha,\beta)=(1.5,0.5)$ \\
\bottomrule\\
\end{tabular}
\end{table}

The simulation design is as follows. For each model, we generate a random sample $\set{Y_i}_{i=1}^n$ with size $n=10^7$. 
Such a large $n$ is coherent with the number of tax payers in our empirical data set; see Table \ref{t:IRS2019}. 
We set the top fractiles to
\begin{equation*} (p_k)_{k=1}^K=(0.001,0.005,0.01,0.05,0.1,\dots,0.9,0.95,0.99,0.995,0.999,1)
\end{equation*}
(so $K=26$) and define the $k$-th threshold $t_k$ as the top $p_k$-th quantile of $\set{Y_i}_{i=1}^n$. We then estimate the ME density $\hat{f}$ as in Section \ref{sec:main}.

Figure \ref{fig:sim} shows the population and estimated densities for each model. 
Because the population densities are skewed, for visibility we plot the density of $\log Y$. 
In each case, the two densities $f$ and $\hat{f}$ are nearly identical.

\begin{figure}[!htb]
\centering
\includegraphics[width=\linewidth]{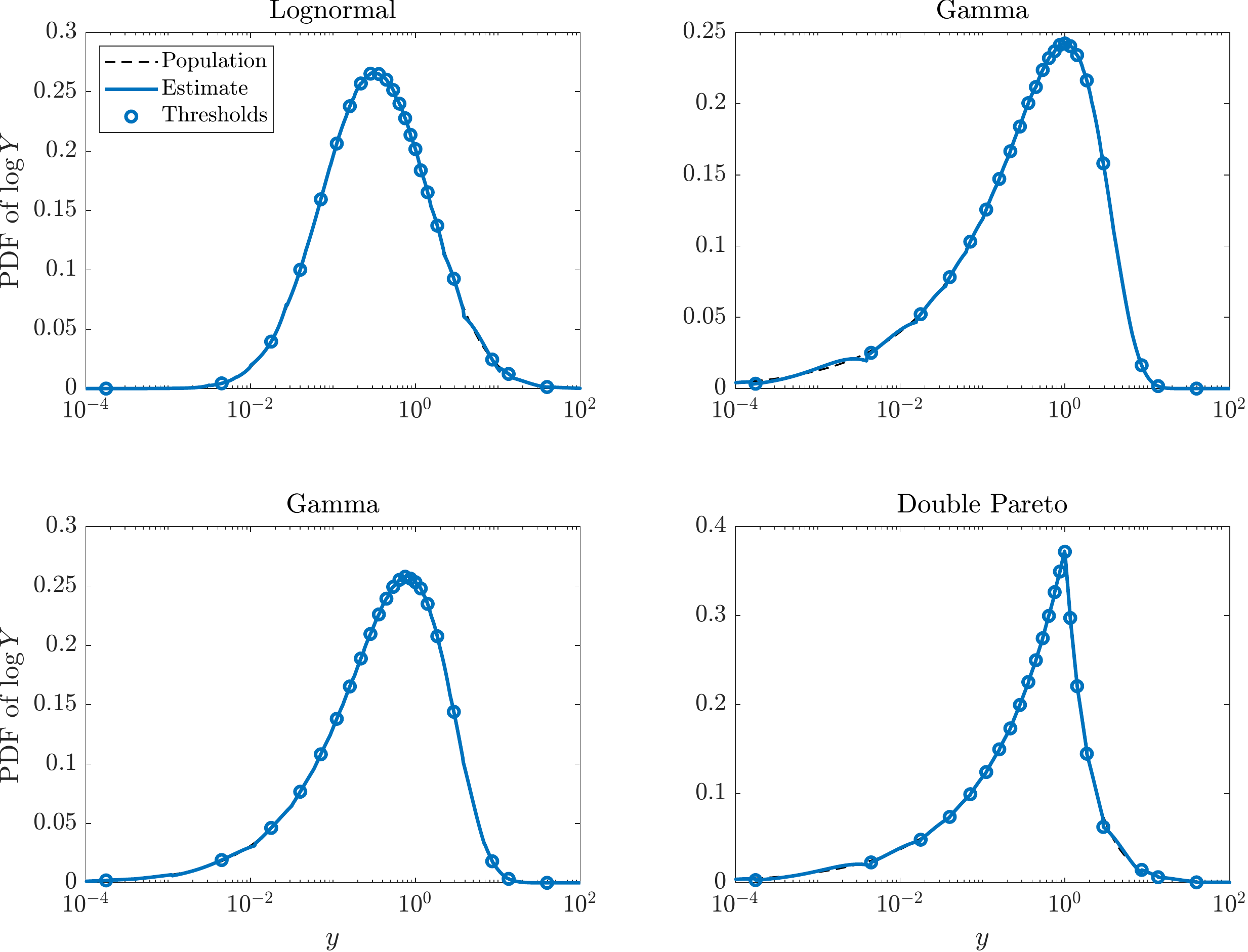}
\caption{PDF of $\log Y$ for each model.}\label{fig:sim}
\end{figure}


\subsection{Top income shares}\label{subsec:sim_topshare}

Next, we estimate the top income shares, which correspond to the Lorenz curve flipped along the 45 degree line. 
There are many existing methods for estimating the Lorenz curve as discussed in Section \ref{sec:Intro}. 
We implement those proposed by \cite{KakwaniPodder1976} (henceforth KP) and \cite{VillasenorArnold1989} (henceforth VA), both of which have been used by the World Bank. 
In addition, we implement a more recently developed method by \cite*{Hajargasht2012} (henceforth HGBRC), which has been further extended by \cite{Chen2018} and \cite{HajargashtGriffiths2020}. 
The KP and VA methods impose some parametric assumptions on the Lorenz curve and essentially run linear regressions of the group mean ($\hat{y}_k$ in our notation) on some transformation of the proportion of each group ($\hat{q}_k$ in our notation). 
The HGBRC method imposes some parametric assumptions on the underlying density and constructs a generalized method of moments estimation.  
Regarding KP, we implement their Method III as described in their Section 4.
Regarding VA, we implement their method with $a=1$ and $d=0$ as described in their Section 4. 
Regarding HGBRC, we adopt their assumption of the generalized beta distribution of the second kind (GB2, \citealp{McDonald1984}) and the diagonal weighting matrix as proposed by \cite{Chotikapanich2007}.

Our data generating process is as follows. We suppose that the population distribution is double Pareto with parameters $\alpha=2.3$, $\beta=1.1$, and $M=1$ (normalization). There are two reasons for using the double Pareto distribution with these parameters. First, this distribution has been shown to fit the income distribution very well; see for instance \citet[Fig.~1(a)]{Toda2012JEBO}. Second, unlike other parametric distributions used in Figure \ref{fig:sim}, the double Pareto distribution admits a closed-form CDF and Lorenz curve as discussed in Appendix \ref{sec:dPLorenz}, which is convenient for numerical evaluation. Appendix \ref{sec:additional} considers other distributions.

We treat the population top $0.1, 1, 5,10,\dots,95,100$ percentiles as the observed thresholds ($K=22$) and compute the population top income shares. Next, we generate random samples with sizes $n=10^4, 10^5, 10^6$ from the population distribution\footnote{Since the logarithm of a double Pareto random variable is Laplace, which is double exponential, we can generate a double Pareto random variable using $\log (Y/M)=\frac{1}{\alpha}X_1-\frac{1}{\beta}X_2$, where $X_1,X_2$ are independent exponential random variables with parameter 1. Therefore $Y=MU_1^{-1/\alpha}U_2^{1/\beta}$, where $U_1,U_2$ are independent uniform random variables on $[0,1]$.} and record the proportions of observations and their average incomes within each group, which we treat as our data.

Implementing our proposed ME, the KP, the VA, and the HGBRC methods, we report their relative bias and relative root mean squared error (RMSE) for the income shares of the top $p_0$ fractile with $p_0 \in \set{0.001, 0.01, 0.05, 0.1, 0.2, \dots, 0.9}$. 
More specifically, let $s_0$ denote the true top income share and $\hat{s}_m$ the estimator in the $m$-th simulation draw with $m\in \set{1,\dots,M}$.   
We define the relative bias and RMSE by
\begin{align*}
\text{Relative Bias}&=\frac{1}{M}\sum_{m=1}^M(\hat{s}_m/s_0-1),\\
\text{Relative RMSE}&=\sqrt{\frac{1}{M}\sum_{m=1}^M(\hat{s}_m/s_0-1)^2},
\end{align*}
respectively. 
Table \ref{t:simu} presents the results based on $M=1{,}000$ simulations. 

\begin{table}[!htb]
\centering
${}$
\caption{Relative bias for top income shares.}\label{t:simu}
\resizebox{\textwidth}{!}{
\begin{tabular}{lrrrrrrrrrrrr}
\toprule

\multicolumn{13}{l}{\textbf{Relative Bias}}\\
$p_0$ & 0.001 & 0.01 & 0.05 & 0.1 & 0.2 & 0.3 & 0.4 & 0.5 & 0.6 & 0.7 & 0.8 & 0.9 \\
\midrule
$n$             & \multicolumn{12}{c}{\bf ME} \\
$10^4$ &0.007 &0.003 &0.001	 &0.001	 &0.001	 &0.000	 &0.000	  &0.000	 &0.000	 &0.000	 &0.000	 &0.000\\
$10^5$ &0.001 &0.000 &0.000	 &0.000	 &0.000	 &0.000	 &0.000	  &0.000	 &0.000	 &0.000	 &0.000	 &0.000\\
$10^6$ &0.000 &0.001 &0.000	 &0.000	 &0.000	 &0.000	 &0.000	  &0.000	 &0.000	 &0.000	 &0.000	 &0.000\\
\midrule
$n$            & \multicolumn{12}{c}{\bf KP} \\
$10^4$ &	-0.353	&-0.348	&0.025	&0.125	&0.101	&0.033	&-0.020	&-0.036	&-0.031	&-0.014	&0.002	&0.005\\
$10^5$ &	-0.349	&-0.343	&0.023	&0.123	&0.099	&0.032	&-0.020	&-0.037	&-0.031	&-0.014	&0.002	&0.005\\
$10^6$ &	-0.347	&-0.342	&0.023	&0.123	&0.099	&0.032	&-0.021	&-0.037	&-0.031	&-0.014	&0.002	&0.005\\
\midrule
$n$          & \multicolumn{12}{c}{\bf VA} \\
$10^4$ &	0.257	&0.052	&0.020	&0.029	&0.021	&-0.011	&-0.039	&-0.035	&-0.019	&-0.005	&0.002	&0.002\\
$10^5$ &	0.253	&0.049	&0.018	&0.027	&0.020	&-0.012	&-0.039	&-0.035	&-0.019	&-0.005	&0.002	&0.002\\
$10^6$ &	0.253	&0.049	&0.017	&0.027	&0.020	&-0.012	&-0.039	&-0.035	&-0.019	&-0.005	&0.002	&0.002\\
\midrule
$n$          & \multicolumn{12}{c}{\bf HGBRC} \\
$10^4$ &	-0.094	&-0.054	&-0.015	&0.003	&0.017	&0.009	&-0.007	&-0.007	&-0.004	&-0.001	&0.000	&0.000\\
$10^5$ &	-0.049	&-0.023	&0.001	&0.012	&0.019	&0.010	&-0.005	&-0.005	&-0.002	&-0.001	&0.000	&0.000\\
$10^6$ &	-0.039	&-0.019	&0.003	&0.012	&0.019	&0.010	&-0.005	&-0.005	&-0.002	&-0.001	&0.000	&0.000\\
\midrule
\multicolumn{13}{l}{\textbf{Relative RMSE}}\\
$p_0$ & 0.001 & 0.01 & 0.05 & 0.1 & 0.2 & 0.3 & 0.4 & 0.5 & 0.6 & 0.7 & 0.8 & 0.9 \\
\midrule
$n$             & \multicolumn{12}{c}{\bf ME} \\
$10^4$ &	0.440	&0.133	&0.053	&0.033	&0.019	&0.012	&0.009	&0.006	&0.004	&0.002	&0.001	&0.000\\
$10^5$ &	0.136	&0.042	&0.017	&0.010	&0.006	&0.004	&0.003	&0.002	&0.001	&0.001	&0.000	&0.000\\
$10^6$ &	0.046	&0.014	&0.005	&0.003	&0.002	&0.001	&0.001	&0.001	&0.000	&0.000	&0.000	&0.000\\
\midrule
$n$             & \multicolumn{12}{c}{\bf KP} \\
$10^4$ &	0.427	&0.370	&0.062	&0.130	&0.102	&0.035	&0.021	&0.037	&0.031	&0.014	&0.002	&0.005\\
$10^5$ &	0.357	&0.346	&0.029	&0.123	&0.099	&0.032	&0.021	&0.037	&0.031	&0.014	&0.002	&0.005\\
$10^6$ &	0.348	&0.343	&0.024	&0.123	&0.099	&0.032	&0.021	&0.037	&0.031	&0.014	&0.002	&0.005\\
\midrule
$n$        & \multicolumn{12}{c}{\bf VA} \\
$10^4$ &	0.268	&0.079	&0.051	&0.048	&0.034	&0.022	&0.041	&0.036	&0.020	&0.005	&0.002	&0.002\\
$10^5$ &	0.254	&0.052	&0.023	&0.030	&0.022	&0.013	&0.039	&0.035	&0.019	&0.005	&0.002	&0.002\\
$10^6$ &	0.253	&0.049	&0.018	&0.027	&0.020	&0.012	&0.039	&0.035	&0.019	&0.005	&0.002	&0.002\\
\midrule
$n$          & \multicolumn{12}{c}{\bf HGBRC} \\
$10^4$ &	0.223	&0.137	&0.071	&0.044	&0.027	&0.015	&0.011	&0.009	&0.005	&0.002	&0.001	&0.000\\
$10^5$ &	0.093	&0.049	&0.024	&0.019	&0.021	&0.011	&0.006	&0.005	&0.003	&0.001	&0.000	&0.000\\
$10^6$ &	0.054	&0.026	&0.011	&0.015	&0.020	&0.011	&0.005	&0.005	&0.002	&0.001	&0.000	&0.000\\

\bottomrule
\end{tabular}
}
\caption*{\footnotesize Note: Relative Bias and RMSE for the income shares of the top $p_0$ fractile. ME denotes our proposed method. KP, VA, and HGBRC denote the methods proposed respectively by \cite{KakwaniPodder1976}, \cite{VillasenorArnold1989}, and \cite{Hajargasht2012}. See the main text for the data generating process. The results are based on 1,000 simulation draws.}
\end{table}

The findings can be summarized as follows. 
First, our proposed ME method performs very well in terms of both bias and RMSE. 
They decrease as $n$ increases and are smaller than those of the other three methods for most of the $(n,p_0)$ combinations, especially the bias. 
Second, the KP and the VA methods both impose some parametric assumptions on the Lorenz curve and hence implicitly on the underlying density. 
In particular, the VA method imposes that the Lorenz curve is a part of an ellipse. 
This assumption implies that the underlying density $f(y)$ (after a location- and scale-transformation) is proportional to $(1+y^2/2)^{-3/2}$, which is the Student $t$ distribution with two degrees of freedom \citep[Theorem 4]{VillasenorArnold1989}.
The KP method introduces a new coordinate system and imposes another parametric form on the Lorenz curve. 
The implied density is still parametric but does not have a closed-form expression.  
The HGBRC method assumes the GB2 density, which has Pareo upper and lower tails. Therefore, its performance is substantially better than those of KP and VA. 
In summary, these and any other parametric assumptions could lead to large bias and RMSE caused by misspecification, which do not decrease with $n$.

\subsection{Comparison to kernel estimator}\label{subsec:sim_kernel}

Finally, we compare our proposed ME estimator with the nonparametric kernel estimator proposed by \citet{BlowerKelsall2002}. 
Given a bandwidth $h$, define $K_h(u)$ as the PDF of the normal distribution with mean zero and variance $h^2$, that is, 
\begin{equation*}
K_{h}(u) =\frac{1}{\sqrt{2\pi}h}\exp \left( -\frac{u^2}{2h^2}\right). 
\end{equation*}
We implement \citet[][eq.(1.4)]{BlowerKelsall2002} by constructing the density estimator
\begin{equation*}
\hat{f}_\mathrm{BK}(y) =\sum_{k=2}^{K}\hat{q}_{k}\frac{\int_{I_{k}}K_{h}(s-y) \hat{f}_{0}(s) \diff s}{\int_{I_{k}}\hat{f}_{0}(s) \diff s},
\end{equation*}
where $\hat{f}_{0}(s)$ is the histogram estimator
\begin{equation*}
\hat{f}_{0}(y) =\sum_{k=2}^{K}\frac{\hat{q}_{k}}{t_{k-1}-t_{k}}\id_{I_k}(y).
\end{equation*}
Using the fact that $\int_{I_{k}}\hat{f}_{0}(s) \diff s=\hat{q}_{k}$ and $K_{h}$ is the normal density, we can simplify $\hat{f}_\mathrm{BK}$ as follows:
\begin{align*}
\hat{f}_\mathrm{BK}(y)  &= \sum_{k=2}^{K}\int_{I_{k}}K_{h}(s-y) \hat{f}_{0}(s) \diff s =\sum_{k=2}^{K}\frac{\hat{q}_{k}}{t_{k-1}-t_{k}}\int_{t_{k}}^{t_{k-1}}K_{h}(s-y) \diff s \\
&=\sum_{k=2}^{K}\frac{\hat{q}_{k}}{t_{k-1}-t_{k}}\left( \Phi \left( \frac{t_{k-1}-y}{h}\right) -\Phi \left( \frac{t_{k}-y}{h}\right) \right) ,
\end{align*}
where $\Phi$ denotes the CDF of the standard normal distribution.

\citet{BlowerKelsall2002} do not derive any asymptotic properties of this estimator nor theoretical requirements on the choice of the bandwidth. We implement a variety of choices of $h$ to examine its finite sample performance. 
Specifically, we use the rule-of-thumb choice $h=c\hat{\sigma}n^{-1/5}$, where $\hat{\sigma}$ is the sample standard deviation based on individual observations (which is in principle infeasible given the tabulated data) and $c\in\set{0.1, 0.5, 1.0, 1.5}$ is some constant.

Figure \ref{fig:kernel} presents the relative RMSE for the density estimators when the data generating process is double Pareto, lognormal, and gamma as in Section \ref{subsec:sim_topshare} with sample size $n=10^4,10^5,10^6$. 
Although these figures are not necessarily easy to read, the RMSEs for the ME (BK) estimator are indicated with solid (dashed) lines. As is clear from this figure, the RMSEs for the ME estimator is generally closest to the horizontal axis uniformly across quantiles, so the performance of our proposed ME method is outstanding. 
In addition, we also implement the kernel estimator studied by \citet{Reyes2016}. 
Its performance is substantially worse than that proposed by \citet{BlowerKelsall2002} and hence not reported. 

\begin{figure}[!htb]
\centering
\includegraphics[width=\linewidth]{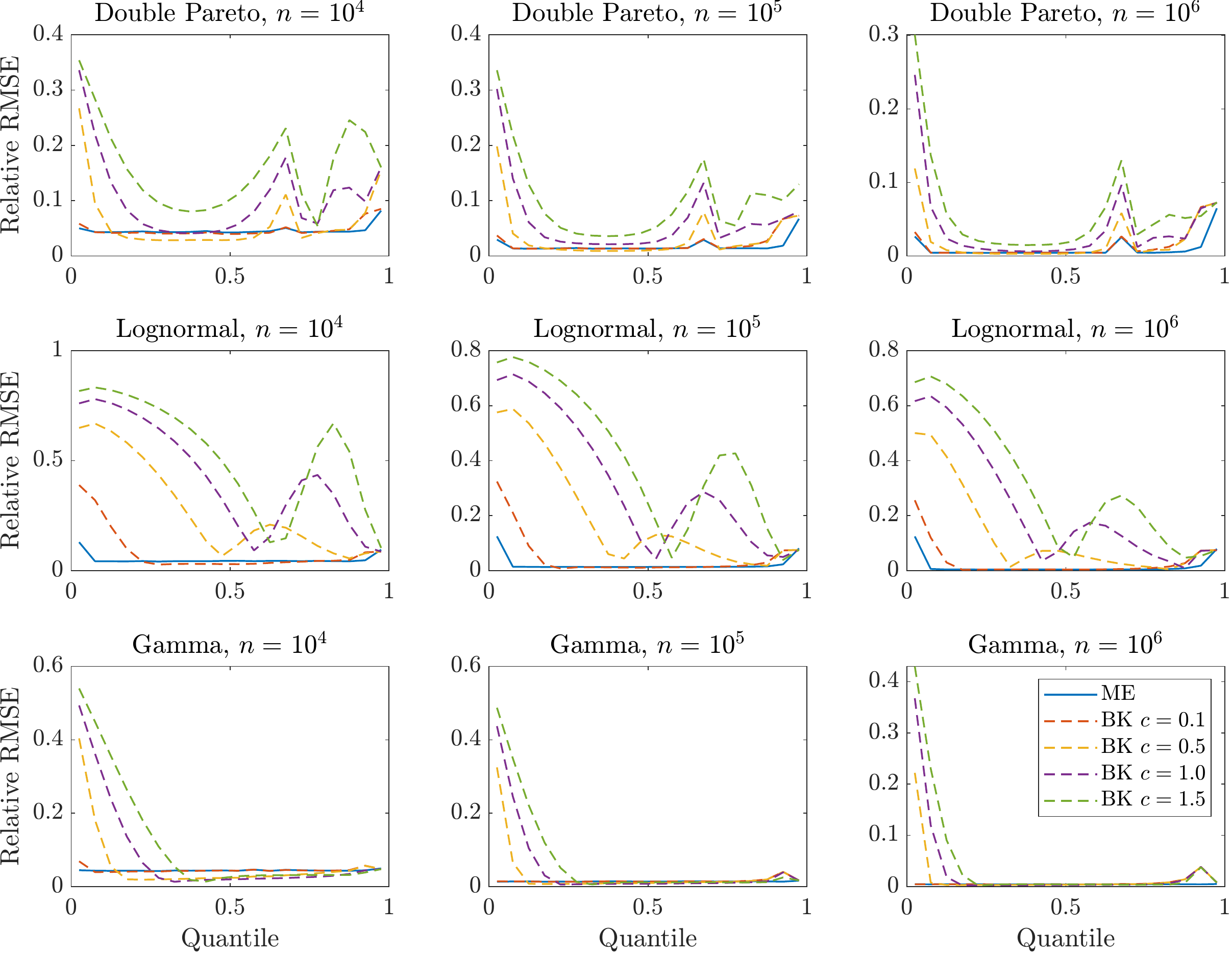}
\caption{Relative RMSE of density estimators.}\label{fig:kernel}
\caption*{\footnotesize Note: The figure presents the relative RMSE for the density estimators when the data generating process is double Pareto, lognormal, and gamma with sample size $n=10^4,10^5,10^6$. ME denotes our proposed maximum entropy method, whereas BK denotes the kernel estimator of \cite{BlowerKelsall2002} with various bandwidth parameters $c$.}
\end{figure}

\section{Income distribution in the United States}
\label{sec:Income}

We consider two empirical applications of our method. First, we estimate the distribution of U.S. income distribution for particular years. Second, we estimate the top income shares (including mid-sample) over the past century.

\subsection{Income distribution in 1946 and 2019}

We estimate the distribution of U.S. adjusted gross income (AGI) in 1946 and 2019. We choose 2019 because it is the most recent year for which data is available. Before World War II, because only a small fraction of the population filed for taxes, the tax returns data is not representative for the population.\footnote{The fraction of tax filers among potential tax units has been stable at around 80--90\% postwar but in the range of 1--20\% before 1940; see the discussion in \cite{PikettySaez2003}.} For this reason, we choose 1946 because it is one of the earliest years for which the tax returns data is representative for the population. Note that unlike in recent years, the tabulated summary data set is almost the only publicly available income data set in early years such as 1946.

To make the results comparable across years, we measure income in 2019 dollars by adjusting with the Consumer Price Index (CPI). The number of income groups is $K=48$ for 1946 and $K=18$ for 2019. Figure \ref{fig:logincomePDF} shows the ME density estimates $\hat{f}(\e^x)\e^x$ of log income $x=\log y$ in a semi-log scale. 
 
We can summarize the findings as follows. 
First, for each year the log income distribution is bell-shaped but slightly asymmetric. Second, observe that although $\hat{f}$ is piecewise exponential, it is not necessarily continuous at the bin thresholds as can be seen from the spikes in the 1946 density. Third, the 2019 density is more spread-out than 1946, which suggests that income inequality has increased.  Finally, Figure \ref{fig:loglog} shows the tail probability $1-\hat{F}(y)$ in a log-log scale, which is continuous. The fact that the 2019 tail probability is higher than 1946 implies that the 2019 (real) income distribution first-order stochastically dominates the 1946 one, possibly due to economic growth. The graphs also show a straight-line pattern for high incomes, which is consistent with a Pareto upper tail documented elsewhere; see for instance \cite{deVriesToda2022RIW} and the references therein. Because the slope is steeper for 1946 than in 2019, the income Pareto exponent is smaller (top income inequality is higher) in 2019.

\begin{figure}[!htb]
\centering
\begin{subfigure}{0.48\linewidth}
\includegraphics[width=\linewidth]{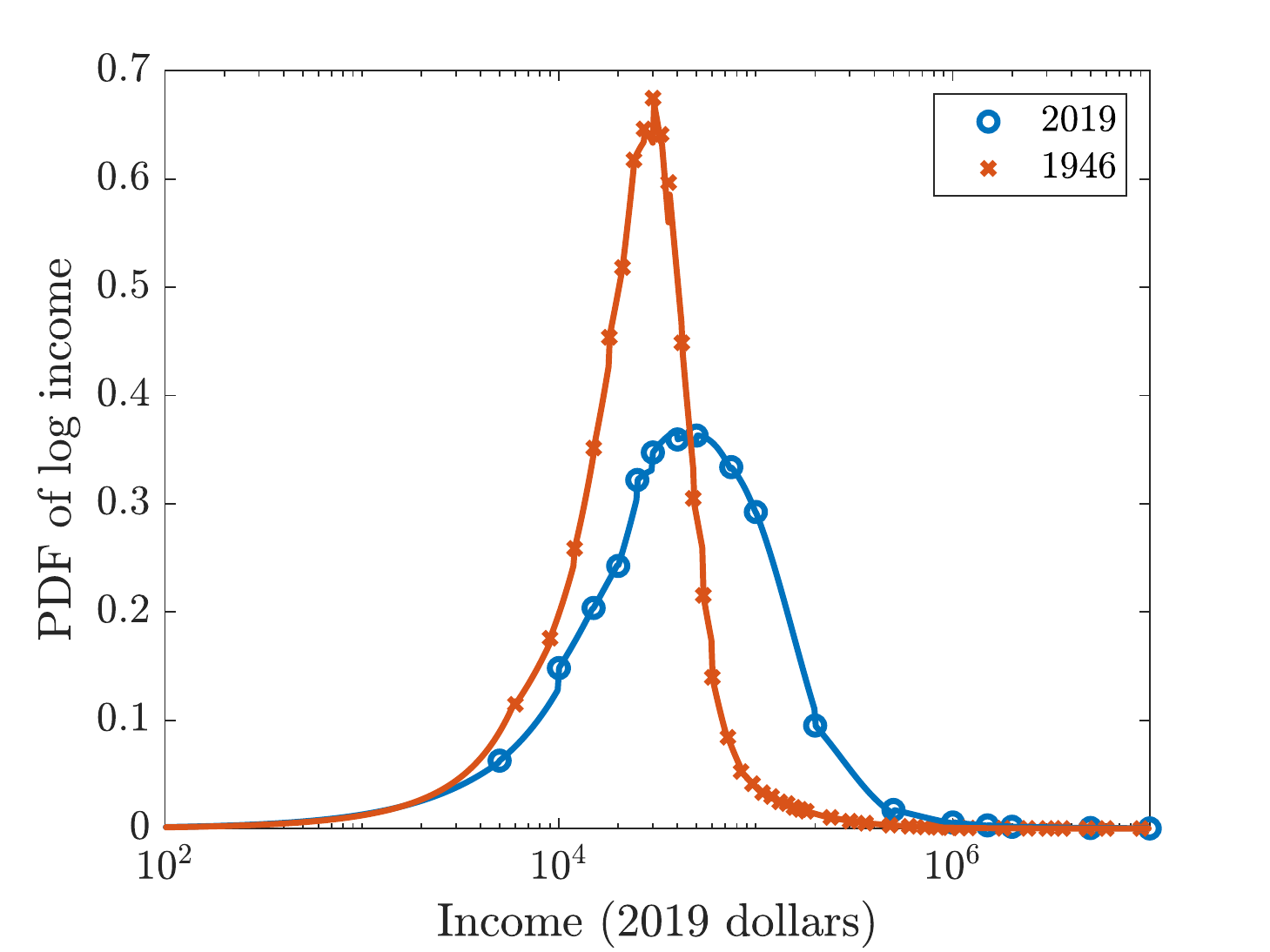}
\caption{Probability density of log income.}\label{fig:logincomePDF}
\end{subfigure}
\begin{subfigure}{0.48\linewidth}
\includegraphics[width=\linewidth]{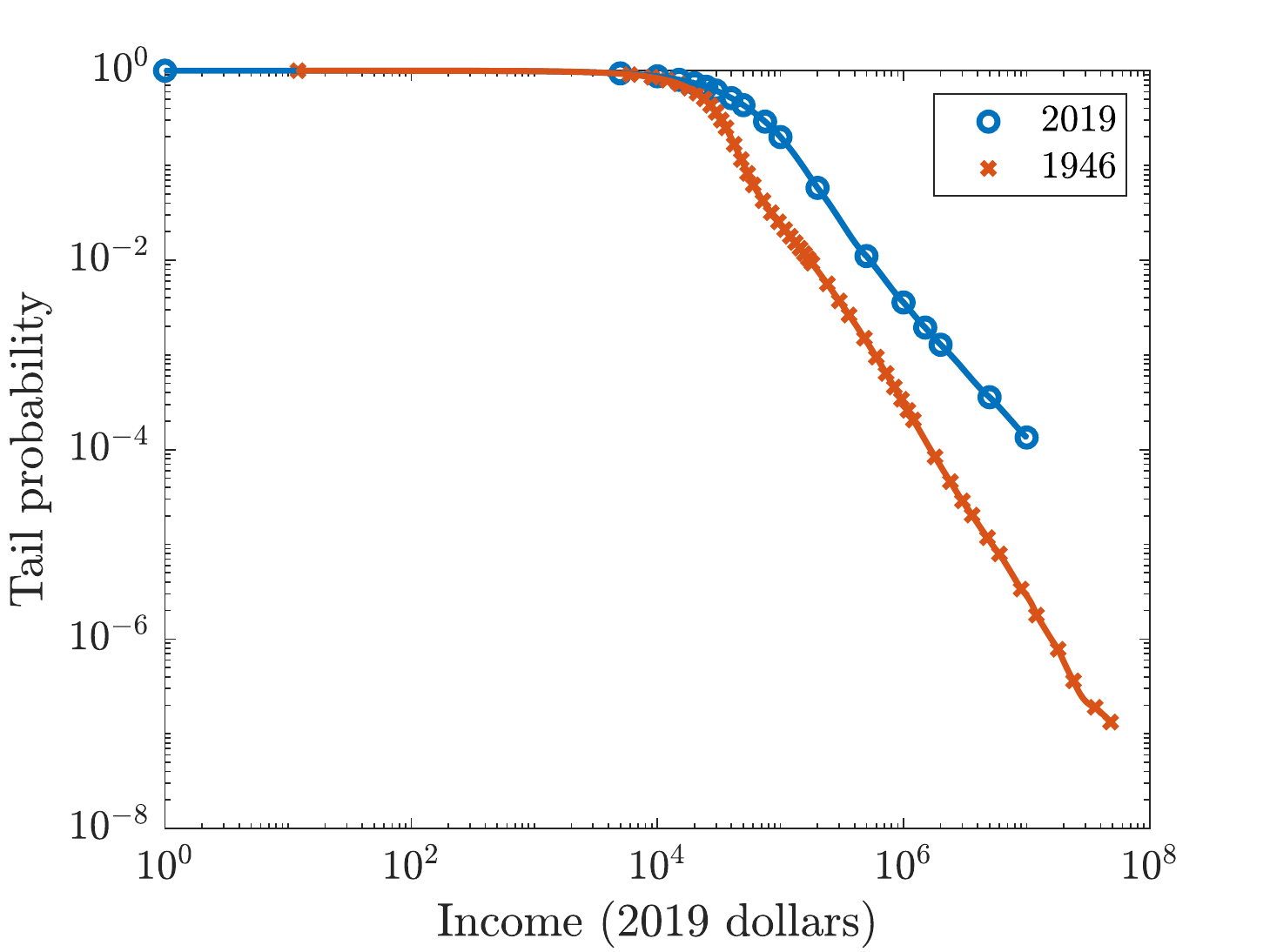}
\caption{Log-log plot of tail probability.}\label{fig:loglog}
\end{subfigure}
\caption{U.S. income distributions in 1946 and 2019.}
\end{figure}

Because the ME density $\hat{f}$ is piecewise exponential, which is analytically tractable, it is straightforward to compute statistics such as top income shares; see Section \ref{subsec:discussion}. Figure \ref{fig:topshare} shows the top income shares, both in original and log-log scales. Figure \ref{fig:topshare_original} (original scale) is essentially the Lorenz curve flipped along the 45 degree line. The fact that the 2019 curve is above the 1946 one suggests that income inequality has increased. The straight-line pattern in log-log scale (Figure \ref{fig:topshare_loglog}) is consistent with a Pareto upper tail.

\begin{figure}[!htb]
\centering
\begin{subfigure}{0.48\linewidth}
\includegraphics[width=\linewidth]{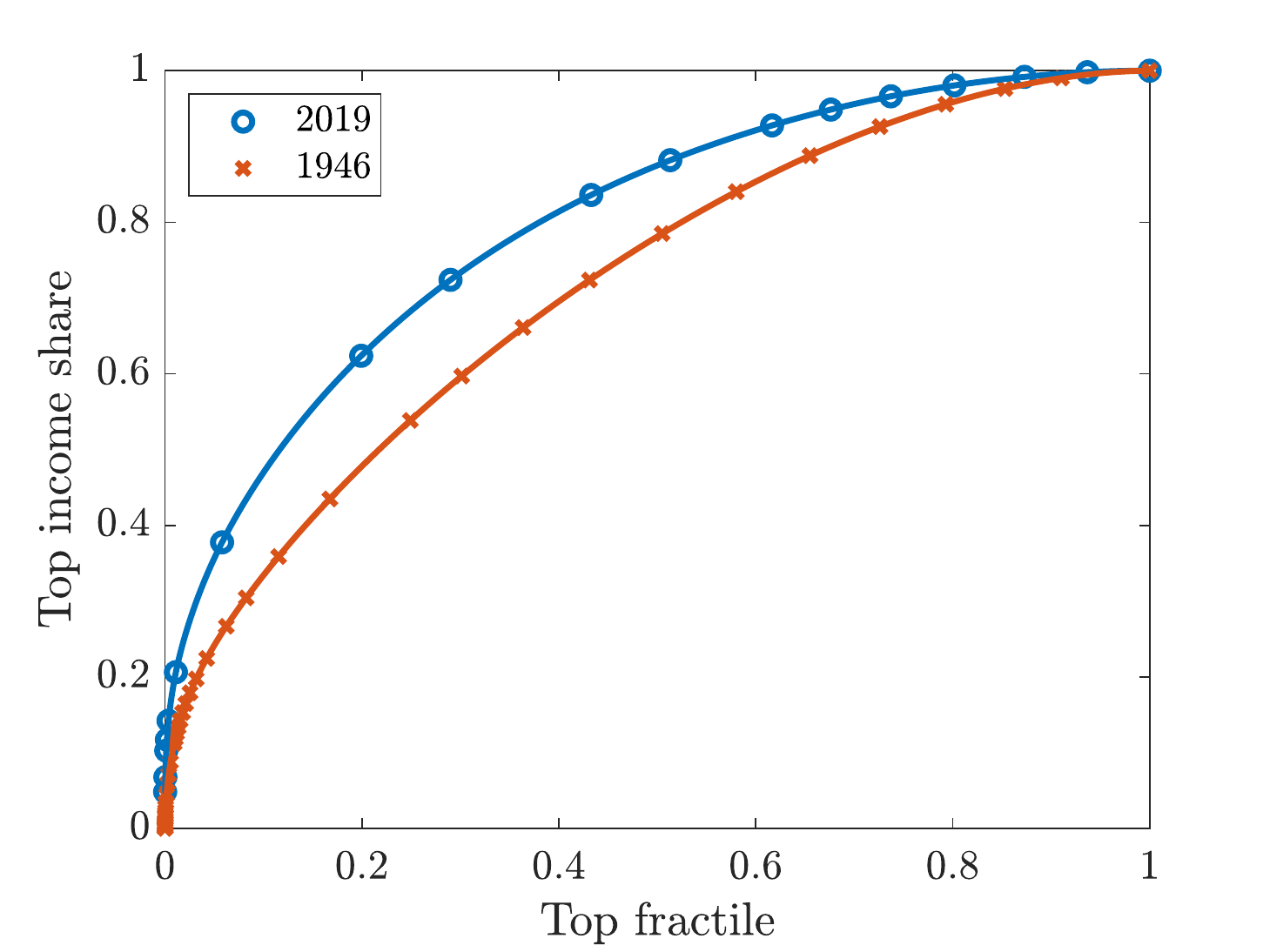}
\caption{Original scale.}\label{fig:topshare_original}
\end{subfigure}
\begin{subfigure}{0.48\linewidth}
\includegraphics[width=\linewidth]{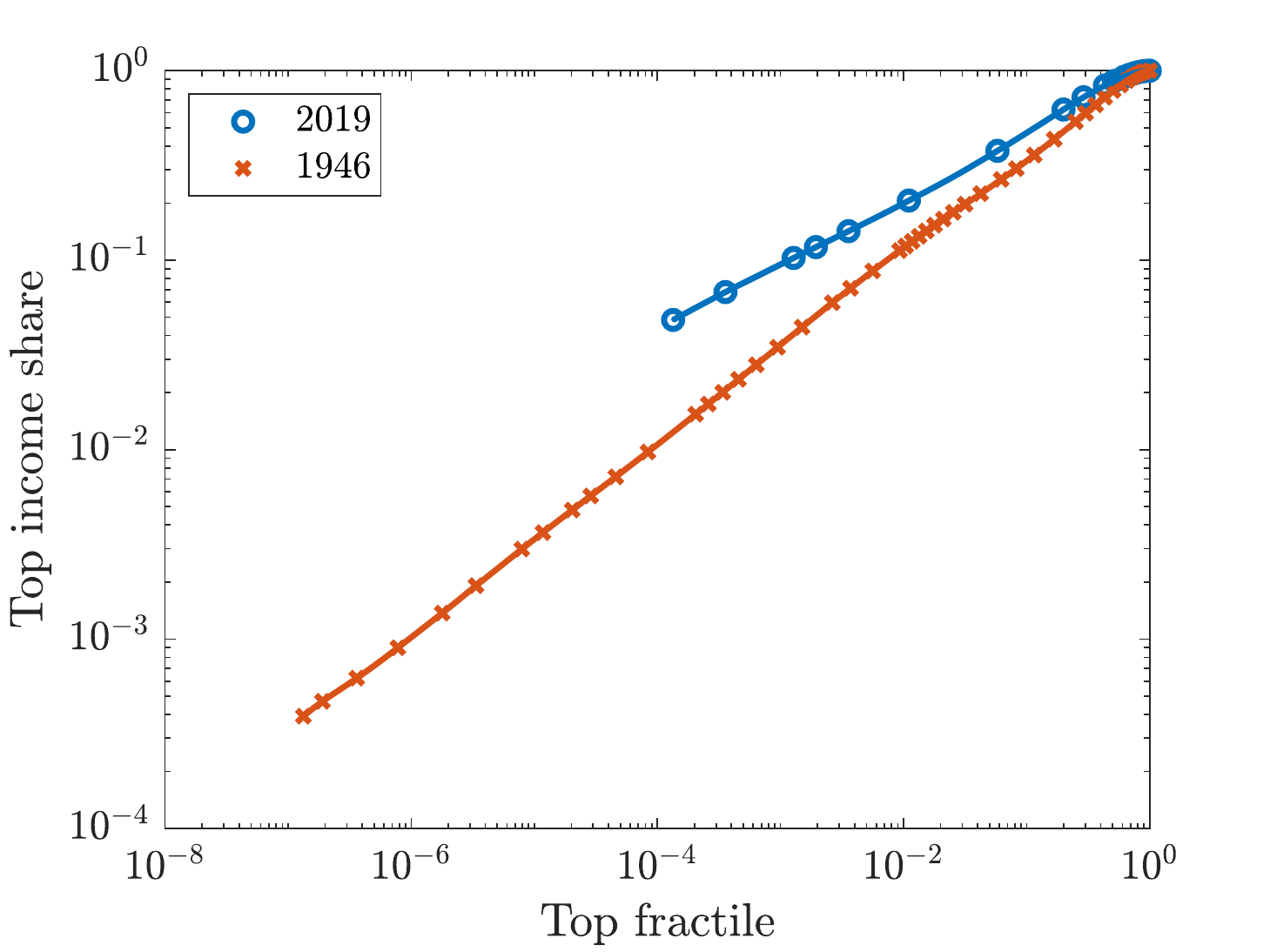}
\caption{Log-log scale.}\label{fig:topshare_loglog}
\end{subfigure}
\caption{U.S. top income shares in 1946 and 2019.}\label{fig:topshare}
\end{figure}

\subsection{Top income shares over the century}

Finally, we apply the proposed method to estimate the top $p$ fractile income share for various values of $p\in [0,1]$. To construct the top income shares, we use the following approach. First, we collect the tabulated summaries of income similar to Table \ref{t:IRS2019} for each year from the IRS \emph{Statistics of Income}.\footnote{See the appendix in \cite{LeeSasakiTodaWangExponents} for specific details.} These tables contain information on the number of tax units (an individual or a married couple with dependents if any) and their total income within each income group. As these tables contain only tax filers, we complement them with the total number of potential tax units and total income estimated by \cite{PikettySaez2003}.\footnote{We obtain the total number of tax units from the spreadsheet \url{https://eml.berkeley.edu/~saez/TabFig2018.xls}, Table A0, Column B, and total income from Column I.} We suppose that non-filers are low income households and thus do not affect the calculation of the top $p$ fractile income share if $p$ is small enough. Because the fraction of tax filers among potential tax units exceeds 0.1 (0.8) since 1936 (1945), we construct the top $p$ fractile income share for $p\in \set{0.01,0.05,0.1}$ since 1936 and also for for $p\in \set{0.2, 0.4, 0.6, 0.8}$ since 1945. Figure \ref{fig:topshare_1-80} shows the results.

\begin{figure}[htb!]
\centering
\includegraphics[width=0.7\linewidth]{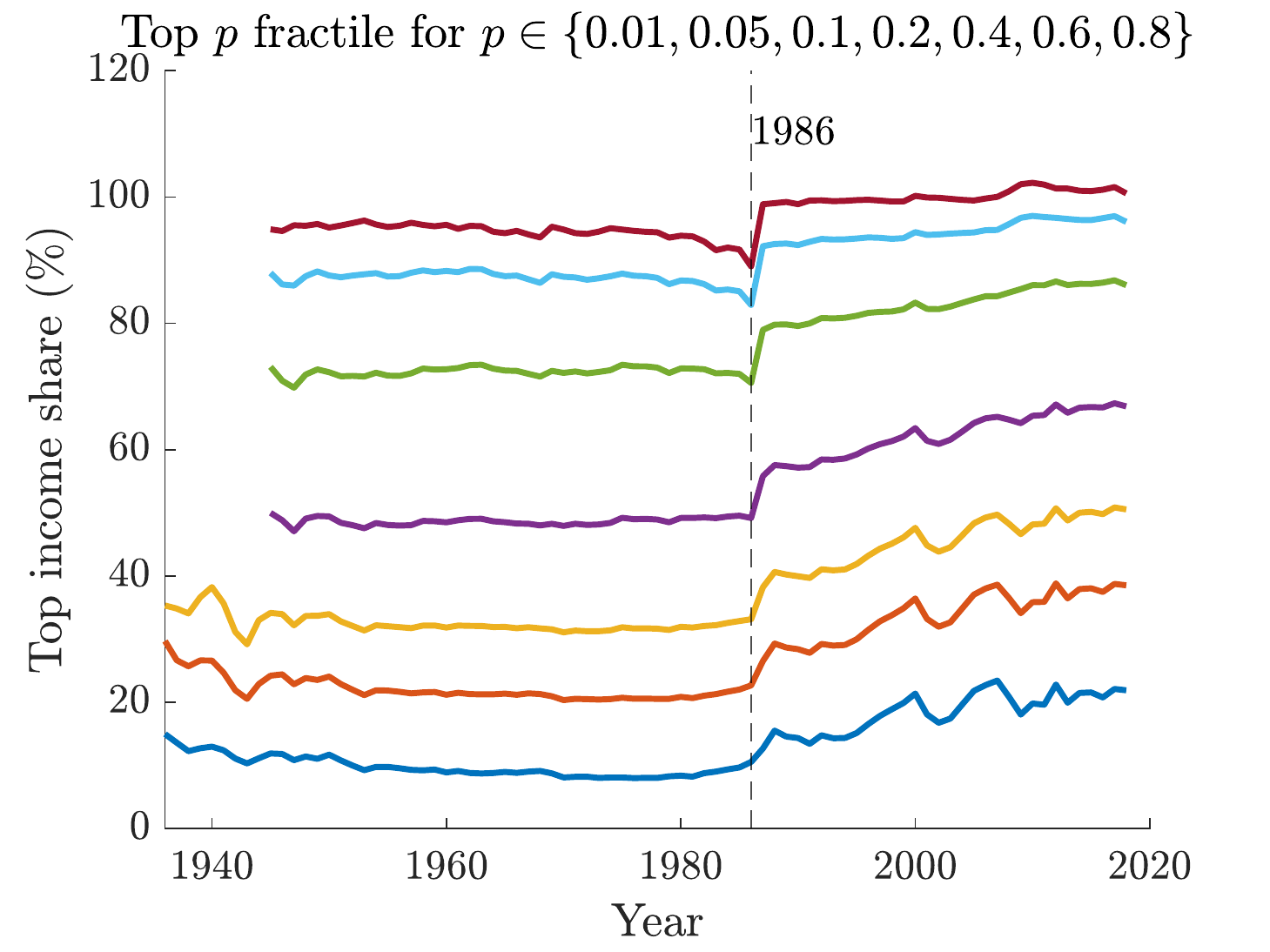}
\caption{Top income shares in U.S., 1936--2018.}\label{fig:topshare_1-80}
\caption*{\footnotesize Note: From bottom to top, the figure plots the top $p$ fractile income share for $p\in\set{0.01,0.05,0.1,0.2,0.4,0.6,0.8}$. Mid-sample fractiles ($p>0.1$) are omitted pre-1945 because the fraction of tax filers is too small.}
\end{figure}

The top 1\%, 5\%, 10\% income shares exhibit an inverse U-shaped pattern, which is well known. To the best of our knowledge, the top income shares for mid-sample fractiles (\eg, $p=0.2, 0.4, 0.6, 0.8$) have not been reported in the previous literature. We find that the mid-sample top income shares exhibit an abrupt upward jump between 1986 and 1987. This could be due to the Tax Reform Act of 1986, which significantly altered the treatment of capital gains income.\footnote{\label{fn:cg}According to \citet[p.~40]{PikettySaez2001WP}, the fraction of capital gains income included in AGI was 100\% until 1933, 70\% in 1934--1937, 60\% in 1938--1941, 50\% in 1942--1978, 40\% in 1979--1986, and 100\% since 1987. Because high income earners tend to hold more financial asset (and generate more capital gains), the large exclusion of capital gains in 1934--1986 likely causes the top income shares to be biased downwards.}

We next compare our top income shares to those constructed by \cite{PikettySaez2003}.\footnote{We obtain the top income shares from Table A3 in the spreadsheet \url{https://eml.berkeley.edu/~saez/TabFig2018.xls}. \citet[Appendix B, Section 1.1, pp.~592--599]{Piketty2001book} provides the details of the method for constructing these top income shares. Since it is written in French, we describe the method in Appendix \ref{sec:piketty} for the convenience of the readers.} Figure \ref{fig:topshare_1-10} shows the top 1\%, 5\%, 10\% income shares constructed in two ways. We find that post-1986, our top income shares are nearly identical to those from \cite{PikettySaez2003}, even though our method is nonparametric while their method is parametric (assuming a Pareto upper tail). This is likely because the upper tail of the income distribution can be well approximated by the Pareto distribution. However, there are large discrepancies between the two series pre-1986 because we have used the raw AGI without adjusting for the excluded capital gains income discussed in Footnote \ref{fn:cg}.

\begin{figure}[!htb]
\centering
\includegraphics[width=0.7\linewidth]{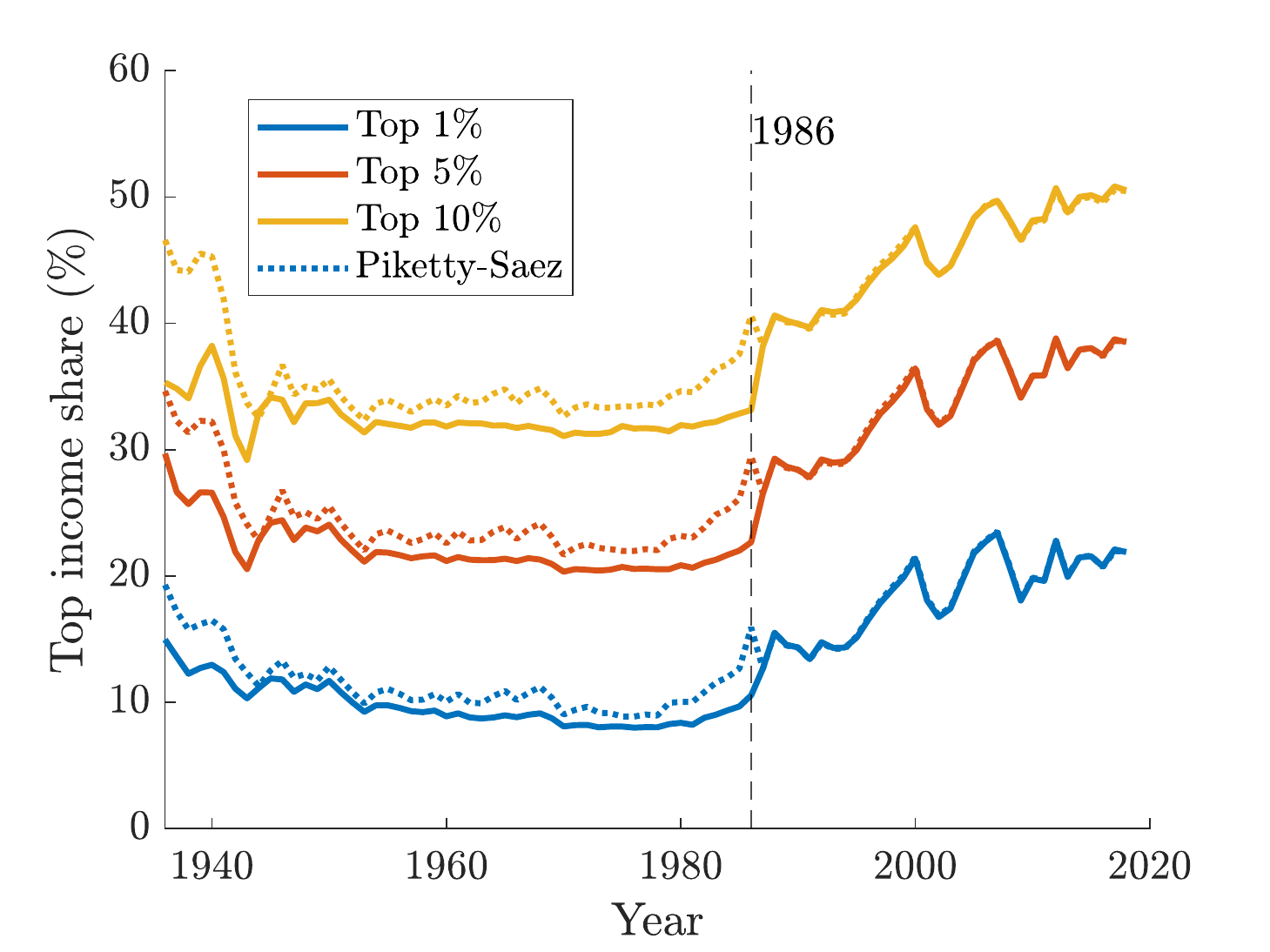}
\caption{Comparison to \cite{PikettySaez2003} series.}\label{fig:topshare_1-10}
\end{figure}

\section{Concluding remarks}
\label{sec:Conclusion}

Existing standard nonparametric estimators of density and cumulative distribution functions require individual-level data.
Even when individual-level information is difficult to access due to confidentiality concerns, tabulated summaries of such data are often publicly available.
Administrative data of income are the leading examples.
In this paper, we propose a novel method of maximum entropy density estimation from tabulated summary data and establish the strong uniform consistency of the density and cumulative distribution estimators.
This method enjoys many desirable properties.
First, the estimator is piecewise exponential, which is analytically tractable.
Using its functional form, it is straightforward to compute statistics such as top income shares.
Second, and more importantly, our estimator is free from tuning parameters unlike existing kernel-based methods, which is attractive in practice.
This feature provides a complete theoretical justification that our proposed estimator works in practice, unlike the existing kernel-based estimators for which the theory does not formally account for the effects of bandwidth choice in practice.

\appendix
\section*{Appendix}\label{sec:Appendix}
Appenidx \ref{sec:ME_unknown_thresholds} provides a smoothed version of the proposed ME estimator. 
Appendix \ref{sec:proof} contains mathematical proofs of the strong uniform consistency. 
Appendix \ref{sec:numerical} solves some numerical issues in constructing the proposed estimator.
Appendix \ref{sec:dPLorenz} computes the Lorenz curve for the double Pareto distribution.
Appendix \ref{sec:piketty} describes the details of \cite{Piketty2003}'s Pareto interpolation method, which was originally written in French. 
Appendix \ref{sec:additional} contains additional simulation studies.

\section{Smoothed ME estimation}\label{sec:ME_unknown_thresholds}

Our estimator described in Section \ref{subsec:ME} in the main text is generally discontinuous at the bin thresholds $t_1,\dots,t_K$. 
In practice, an \emph{ad hoc} yet simple method to smooth the estimator is to replace it with some local polynomial function within small neighborhoods of $t_1,\dots,t_K$. 
As a more systematic alternative, we can slightly shift the bin boundaries so that the implied ME estimator becomes continuous. 
To that end, we can treat $t_1,\dots,t_K$ as arguments and further minimize the Kullback-Leibler divergence \eqref{eq:Jstar} over $t_1,\dots,t_K$ as follows. 
Because the minimum threshold $t_K$ is unidentified, we set it to an arbitrary value $t_K<y_K$, say, $t_K=0$. 
Define the set of admissible thresholds by
\begin{align}
T &= \set{(t_1,\dots,t_{K-1})\in \R^{K-1}: (\forall k) t_k<y_k<t_{k-1}} \notag \\
&=(y_2,y_1)\times (y_3,y_2) \times\dots\times(y_K,y_{K-1}), \label{eq:T}
\end{align}
which is a nonempty open interval in $\R^{K-1}$. 
The following theorem shows that $J^*$ in \eqref{eq:Jstar} always achieves a unique minimum on $T$, which is continuous. 

\begin{thm}\label{thm:ME_threshold}
Fix $t_K<y_K$ and let $J^*:T\to \R$ be defined by \eqref{eq:Jstar}. Then $J^*$ is strictly convex and achieves a unique minimum $t^*\in T$. Furthermore, the corresponding maximum entropy density $f^*$ in \eqref{eq:fstar} with $t=t^*$ is continuous on $[t_K,\infty)$.
\end{thm}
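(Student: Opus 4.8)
The plan is to establish, in order, strict convexity of $J^*$ on $T$, existence of a minimizer via a boundary blow-up argument, and continuity of the associated density at the minimizer through the first-order conditions; uniqueness is then immediate from strict convexity. The first step is a per-bin reformulation. Writing $s_k=t_{k-1}-t_k$ and $\mu_k=y_k-t_k$, a short rearrangement of \eqref{eq:Jk} gives $J_k(\lambda;t)=\lambda\mu_k-\Lambda_{s_k}(\lambda)$ for $2\le k\le K$, where $\Lambda_s(\lambda)\coloneqq\log\int_0^s\e^{\lambda y}\diff y$ is the log-partition function of the uniform law on $[0,s]$; hence $J_k(\lambda_k^*;t)=\Lambda_{s_k}^*(\mu_k)$, the convex conjugate of $\Lambda_s$ evaluated at $\mu_k$. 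For the unbounded top bin one computes directly $J_1(\lambda_1^*;t)=-1-\log(y_1-t_1)$. Thus, up to an additive constant, $J^*(t)=q_1\bigl(-1-\log(y_1-t_1)\bigr)+\sum_{k=2}^Kq_k\Lambda_{s_k}^*(\mu_k)$ on the bounded open box $T$ of \eqref{eq:T}, and each $(t_{k-1},t_k)\mapsto(s_k,\mu_k)$ is an invertible affine map. I would also record: the scaling identity $\Lambda_s^*(\mu)=\Lambda_1^*(\mu/s)-\log s$; that $\Lambda_1^*\ge0$ with $\Lambda_1^*(\mu)\to\infty$ as $\mu\to0^+$ or $\mu\to1^-$; and that $\Lambda_s$ is $C^\infty$ and strictly convex with $\Lambda_s'$ a bijection of $\R$ onto $(0,s)$ (since $\Lambda_s''$ is the variance of the exponential-family law on $[0,s]$, hence positive).

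For strict convexity, the top-bin term is strictly convex in $t_1$, so it suffices to show that each $\Lambda_{s_k}^*(\mu_k)$ with $k\ge2$, equivalently $\Lambda_s^*(\mu)$ on $\{0<\mu<s\}$, is \emph{jointly} strictly convex in $(s,\mu)$. Convexity is immediate from $\Lambda_s^*(\mu)=\sup_\lambda[\lambda\mu-\Lambda_s(\lambda)]$, a supremum of functions each of which is affine in $\mu$ plus a strictly convex function of $s$ (a one-line second-derivative check of $-\Lambda_s(\lambda)$ in $s$ for fixed $\lambda$). For strictness, take distinct points in the domain with midpoint $(\bar s,\bar\mu)$: if their $s$-coordinates differ, each member of the family is strictly convex along the joining segment, and evaluating at the \emph{unique} maximizer $\lambda$ of $\lambda\mapsto\bar\mu\lambda-\Lambda_{\bar s}(\lambda)$ — unique by the strict concavity in Proposition \ref{prop:ME_density} — upgrades the midpoint inequality for the envelope to a strict one; if the $s$-coordinates agree, strictness is the classical strict convexity of $\Lambda_s^*(\,\cdot\,)$ as the conjugate of the smooth strictly convex $\Lambda_s$. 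Now, given $t\ne t'$ in $T$, pick $j$ with $t_j\ne t'_j$; the summand of $J^*$ for bin $j$ contains $t_j$ as a free argument and is strictly convex along $[t,t']$, while all other summands are convex there, so $J^*$ is strictly convex along $[t,t']$, hence on $T$.

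For existence, $J^*$ is continuous on $T$ (each $\lambda_k^*(t)$ is continuous, e.g.\ by Berge's maximum theorem), and each summand is bounded below on $T$ (e.g.\ $\Lambda_s^*\ge-\log s$, and the finite bin lengths remain bounded on $\bar T$). Moreover, whenever $t$ approaches a point of $\partial\bar T$, some coordinate $t_j$ converges to an endpoint $y_j$ or $y_{j+1}$, which by the scaling identity and the endpoint behavior of $\Lambda_1^*$ forces either a bin length to $0$ or some ratio $\mu_k/s_k$ to $0$ or $1$, hence the corresponding summand — and therefore $J^*$ — to $+\infty$. Consequently the sublevel set $\{J^*\le J^*(t^0)\}$ for any fixed $t^0\in T$ has compact closure contained in $T$, on which $J^*$ attains a minimum $t^*$; this is the global minimum on $T$, and it is unique by the previous paragraph. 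Finally, at $t^*$ the first-order conditions $\partial J^*/\partial t_j=0$ hold for $j=1,\dots,K-1$; since $\lambda_k^*$ maximizes $J_k(\,\cdot\,;t)$, the envelope theorem gives $\partial J^*/\partial t_j=q_j\,\partial_{t_j}J_j(\lambda_j^*;t)+q_{j+1}\,\partial_{t_j}J_{j+1}(\lambda_{j+1}^*;t)$, and differentiating \eqref{eq:Jk} and comparing with \eqref{eq:fstar} identifies the first term with $f^*(t_j^+)$ and the second with $-f^*(t_j^-)$, the one-sided limits at $t_j$ of the piecewise exponential \eqref{eq:fstar} (the degenerate $\lambda_k^*=0$ cases matching by continuity). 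The first-order conditions thus read $f^*(t_j^-)=f^*(t_j^+)$ for every $j$; since $f^*$ is continuous within each bin by \eqref{eq:fstar}, it is continuous on all of $[t_K,\infty)$.

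The main obstacle is the joint strict convexity of $\Lambda_s^*(\mu)$ in $(s,\mu)$. The supremum representation gives convexity for free, and strictness along the ``$\mu$-only'' directions is standard conjugacy/exponential-family theory, but strictness in the remaining directions requires simultaneously the strict convexity in $s$ of each member of the family and the uniqueness of the maximizing $\lambda$ — and it is exactly there that Proposition \ref{prop:ME_density} is used.
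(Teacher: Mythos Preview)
Your proof is correct. The existence and continuity steps coincide with the paper's: boundary blow-up (equivalently, the paper's lower semicontinuity on $\cl T$ with $J^*=\infty$ on $\partial T$) for existence, and the envelope-theorem first-order condition $\partial J^*/\partial t_j=f^*(t_j^+)-f^*(t_j^-)=0$ for continuity. The genuine difference is in the strict-convexity step. The paper computes the Hessian $\nabla^2_t J(\lambda;t)$ explicitly, obtains a symmetric tridiagonal matrix with entries $c_k=q_k\lambda_k^2\e^{\lambda_k(t_{k-1}-t_k)}/(\e^{\lambda_k(t_{k-1}-t_k)}-1)^2$, and proves positive definiteness via a three-term recursion on leading principal minors (valid almost everywhere since $c_k>0$ iff $\lambda_k\ne0$). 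Your route is more structural: recast each bin contribution as the convex conjugate $\Lambda_{s_k}^*(\mu_k)$, then argue \emph{joint} strict convexity of $(s,\mu)\mapsto\Lambda_s^*(\mu)$ via the supremum representation, using strict convexity of $s\mapsto-\Lambda_s(\lambda)$ for the $s$-varying directions and standard conjugate duality (conjugate of a smooth strictly convex function is strictly convex) for the $\mu$-only directions. Your approach avoids the explicit matrix algebra and the determinant recursion entirely, and the exponential-family/log-partition language makes the structure of the problem transparent; the paper's computation, on the other hand, is fully self-contained and delivers the Hessian explicitly, which could be useful for second-order analysis. One small remark: in your $s$-varying case you invoke uniqueness of the midpoint maximizer $\lambda^*$, but mere existence of a maximizer already gives the strict midpoint inequality; Proposition~\ref{prop:ME_density} of course supplies both.
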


The proof is deferred to Appendix \ref{sec:proof}.

\section{Proofs}\label{sec:proof}

\begin{proof}[Proof of Proposition \ref{prop:ME_density}]
To simplify notation, we suppress the dependence of $J_k$, $\lambda_k^*$, etc.\ on $t$. We use Fenchel duality \citep{borwein-lewis1991} to solve the maximum entropy problem. The Lagrangian of the maximum entropy problem subject to the moment conditions \eqref{eq:mcond} is
\begin{align*}
L(g,\mu,\lambda)&=\int g\log g\diff y+\sum_{k=1}^K\left(\mu_k\left(q_k-\int \id_{I_k}g\diff y\right)+\lambda_k\left(q_ky_k-\int \id_{I_k}yg\diff y\right)\right)\\
&=\sum_{k=1}^Kq_k(\mu_k+y_k\lambda_k)+\int\left(g\log g-\sum_{k=1}^K(\mu_k\id_{I_k}g+\lambda_k\id_{I_k}yg)\right)\diff y,
\end{align*}
where $\mu_k$ and $\lambda_k$ are the Lagrange multipliers corresponding to the moment conditions \eqref{eq:mcond_prob} and \eqref{eq:mcond_mean}. The dual objective function is the minimum of $L$ over unconstrained $g\in L_+^1(I)$. Taking the first order condition (G\^ateaux derivative) pointwise, for $y\in I_k$ we obtain
\begin{equation}
0=\log g+1-(\mu_k+\lambda_ky)\iff g(y)=\e^{\mu_k+\lambda_ky-1}. \label{eq:foc_g}
\end{equation}
Hence the dual objective function becomes
\begin{equation*}
J(\mu,\lambda)\coloneqq \min_{g\in L_+^1(I)}L(g,\mu,\lambda)=\sum_{k=1}^Kq_k(\mu_k+y_k\lambda_k)-\sum_{k=1}^K\int_{I_k}\e^{\mu_k+\lambda_ky-1}\diff y.
\end{equation*}
The dual problem maximizes $J$ with respect to $\mu,\lambda$, which is a concave maximization problem. The first order condition with respect to $\mu_k$ is
\begin{equation}
0=\frac{\partial J}{\partial \mu_k}=q_k-\int_{I_k}\e^{\mu_k+\lambda_ky-1}\diff y\iff \mu_k=1+\log q_k-\log \left(\int_{I_k}\e^{\lambda_ky}\diff y\right).\label{eq:muk}
\end{equation}
Then (with a slight abuse of notation) the objective function becomes
\begin{align}
J(\lambda)\coloneqq \max_{\mu\in \R^K} J(\mu,\lambda)&=\sum_{k=1}^K\left(q_ky_k\lambda_k+q_k\log q_k-q_k\log\left(\int_{I_k}\e^{\lambda_ky}\diff y\right)\right) \notag \\
&=\sum_{k=1}^Kq_k\left(y_k\lambda_k+\log q_k-\log\left(\int_{I_k}\e^{\lambda_ky}\diff y\right)\right) \notag \\
&=\sum_{k=1}^Kq_k(J_k(\lambda_k)+\log q_k), \label{eq:Jlambda}
\end{align}
where $J_k$ is given by \eqref{eq:Jk}. Since $J$ is additively separable, it suffices to maximize $J_k$ for each $k$.

Let us now show that $J_k(\lambda)$ achieves a unique maximum over $\lambda\in \R$. It is straightforward to show the strict concavity of $J_k$ by applying H\"older's inequality to the function $\lambda\mapsto \log\left(\int_{I_k}\e^{\lambda y}\diff y\right)$. As $\lambda\to \pm \infty$, it follows from \eqref{eq:Jk} that
\begin{align*}
J_k(\lambda)&=y_k\lambda-\log\left(\frac{\e^{\lambda t_{k-1}}-\e^{\lambda t_k}}{\lambda}\right)=\log\left(\frac{\lambda\e^{\lambda y_k}}{\e^{\lambda t_{k-1}}-\e^{\lambda t_k}}\right)\\
&=\log\left(\frac{\lambda}{\e^{\lambda (t_{k-1}-y_k)}-\e^{\lambda (t_k-y_k)}}\right)\to -\infty
\end{align*}
because $t_k<y_k<t_{k-1}$. Since $J_k$ is continuous and strictly concave on its domain, it achieves a unique maximum $\lambda_k\in \R$. When $k=1$, we have $J_1(\lambda)=-\infty$ for $\lambda\le 0$, and analytically maximizing $J_1$ for $\lambda>0$, we obtain the expression for $\lambda_1^*$ in \eqref{eq:lambdak_sign}. Differentiating $J_k$ under the integral sign, we obtain
\begin{equation*}
J_k'(\lambda)=y_k-\frac{\int_{I_k}y\e^{\lambda y}\diff y}{\int_{I_k}\e^{\lambda y}\diff y}\\
\implies J_k'(0)=y_k-\frac{\int_{I_k}y\diff y}{\int_{I_k}\diff y}=y_k-\frac{t_k+t_{k-1}}2,
\end{equation*}
so $\lambda_k^*$ can be signed as in \eqref{eq:lambdak_sign}.

If $\lambda_k^*\neq 0$, using \eqref{eq:foc_g} and \eqref{eq:muk} and suppressing the asterisks, we obtain the solution
\begin{equation*}
f^*(y)=g(y)=\e^{\mu_k+\lambda_ky-1}=\frac{q_k\lambda_k\e^{\lambda_k y}}{\e^{\lambda_kt_{k-1}}-\e^{\lambda_kt_k}},
\end{equation*}
which is \eqref{eq:fstar} and is exponential in $y$. The derivation for $\lambda_k^*=0$ is similar. The minimum value \eqref{eq:Jstar} follows from \eqref{eq:Jlambda}.
\end{proof}

\begin{lem}\label{lem:phi}
Define the function $\phi: (0,\infty)\to \R$ by
\begin{equation}
\phi(x)=\frac{\cosh x}{\sinh x}-\frac{1}{x},\label{eq:phi}
\end{equation}
where $\cosh x=\frac{\e^x+\e^{-x}}{2}$ and $\sinh x=\frac{\e^x-\e^{-x}}{2}$. Then $\phi'(x)>0$, $\phi''(x)<0$, $\phi(0+)=0$, $\phi(\infty)=1$, and $\phi'(0+)=1/3$.
\end{lem}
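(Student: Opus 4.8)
The plan is to write $\phi(x)=\coth x-1/x$ and verify the five assertions one at a time, isolating the concavity $\phi''<0$ as the only nonroutine point. For the boundary behaviour I would use elementary series expansions. Writing $\phi(x)=\dfrac{x\cosh x-\sinh x}{x\sinh x}$ and inserting $\cosh x=1+x^2/2+O(x^4)$ and $\sinh x=x+x^3/6+O(x^5)$, the numerator equals $x^3/3+O(x^5)$ and the denominator equals $x^2+O(x^4)$, so $\phi(0+)=0$; since all functions involved are real-analytic near $0$, the same computation gives $\phi(x)=x/3+O(x^3)$ and hence $\phi'(0+)=1/3$. For $x\to\infty$, $\coth x=\dfrac{1+\e^{-2x}}{1-\e^{-2x}}\to1$ and $1/x\to0$, so $\phi(\infty)=1$.

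For the monotonicity, I would differentiate to obtain $\phi'(x)=\dfrac{1}{x^2}-\dfrac{1}{\sinh^2 x}=\dfrac{\sinh^2 x-x^2}{x^2\sinh^2 x}$ and invoke the standard inequality $\sinh x>x$ for $x>0$ (immediate from the power series): the numerator is then positive, so $\phi'(x)>0$ on $(0,\infty)$.

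The concavity carries the weight of the proof. Differentiating once more gives $\phi''(x)=-\dfrac{2}{x^3}+\dfrac{2\cosh x}{\sinh^3 x}$, so $\phi''<0$ on $(0,\infty)$ is equivalent to $x^3\cosh x<\sinh^3 x$ for every $x>0$. I would prove this by comparing Taylor coefficients. Using $\sinh^3 x=\tfrac14(\sinh 3x-3\sinh x)$ (the binomial expansion of $\big(\tfrac{\e^x-\e^{-x}}{2}\big)^3$), the coefficient of $x^n$ in $\sinh^3 x$ equals $\dfrac{3^n-3}{4\,n!}$ for odd $n$ and $0$ for even $n$, while the coefficient of $x^n$ in $x^3\cosh x$ equals $\dfrac{1}{(n-3)!}$ for odd $n\ge3$ and $0$ otherwise. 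Hence $\sinh^3 x-x^3\cosh x=\sum_{n\ \text{odd},\,n\ge3}c_n x^n$ with $n!\,c_n=\dfrac{3^n-3}{4}-n(n-1)(n-2)$, and it is enough to show $c_n\ge0$ for all such $n$ and $c_n>0$ for at least one of them (e.g.\ $n=7$, where $\tfrac14(3^7-3)=546>7\cdot6\cdot5=210$): then for $x>0$ every term is nonnegative and at least one is strictly positive, whence $\sinh^3 x-x^3\cosh x>0$. Both facts follow from the elementary inequality $\dfrac{3^n-3}{4}\ge n(n-1)(n-2)$ for odd $n\ge3$, which holds with equality at $n=3,5$ and, for $n\ge3$, propagates by the step $\dfrac{3^{n+2}-3}{4}=9\cdot\dfrac{3^n-3}{4}+6\ge 9n(n-1)(n-2)+6\ge n(n+1)(n+2)$, closing an induction with increment $2$ from the base $n=3$.

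I expect the coefficient bookkeeping in the concavity step to be the only genuine obstacle; the limits and $\phi'>0$ are essentially one-line computations. A fallback, if a different argument is preferred, is to set $h(x)=\sinh^3 x-x^3\cosh x$, note that $h$ and its first six derivatives vanish at $0$, and show $h^{(7)}(x)>0$ for all $x>0$ — after substituting $\cosh 3x=4\cosh^3 x-3\cosh x$ and controlling the remaining polynomial$\times$hyperbolic terms via $\cosh x\ge1+x^2/2$ and $\sinh x<\cosh x$. This also works but is messier than the power-series comparison, so I would present the latter.
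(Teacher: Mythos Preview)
Your argument is correct. The limits and the monotonicity step match the paper's proof essentially line for line; the only real difference is in the concavity step, where both you and the paper reduce $\phi''<0$ to the inequality $\sinh^3 x>x^3\cosh x$ but then finish it by different means.

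The paper sets $h(x)=\sinh^3 x/\cosh x-x^3$ (note the division by $\cosh x$) and differentiates three times, using $c'=s$, $s'=c$, $c^2-s^2=1$ to obtain the closed-form factorization
\[
h'''(x)=\frac{2}{c^4}(c^2-1)^2(4c^2+3)>0,
\]
and then observes $h(0)=h'(0)=h''(0)=h'''(0)=0$. Your route instead attacks $\sinh^3 x-x^3\cosh x$ directly via the identity $\sinh^3 x=\tfrac14(\sinh 3x-3\sinh x)$ and a termwise comparison of Taylor coefficients, closing with the induction $\tfrac14(3^{n+2}-3)=9\cdot\tfrac14(3^n-3)+6\ge 9n(n-1)(n-2)+6\ge (n+2)(n+1)n$; the last step is equivalent to $2(n-3)(n-1)(4n+1)\ge0$, which is fine for $n\ge3$. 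What the paper's choice of $h=s^3/c-x^3$ buys is that only three derivatives are needed and the result collapses to a clean algebraic identity in $c$; what your power-series route buys is that no differentiation or factorization trick is needed at all, at the cost of a short induction on integer coefficients. Your fallback (seven derivatives of $\sinh^3 x-x^3\cosh x$) is indeed messier than either, so your instinct to lead with the series argument is sound.
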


\begin{proof}
To simplify notation, let $c(x)=\cosh x$ and $s(x)=\sinh x$. $\phi(\infty)=1$ is trivial since $c(x)\sim \e^x/2\sim s(x)$ as $x\to \infty$. Using the definition of $c$ and $s$, we have $c'=s$, $s'=c$, and $c^2-s^2=1$. Taking the derivative of $\phi$, we have
\begin{equation*}
\phi'(x)=\frac{c's-cs'}{s^2}+\frac{1}{x^2}=\frac{s^2-c^2}{s^2}+\frac{1}{x^2}=-\frac{1}{s^2}+\frac{1}{x^2}.
\end{equation*}
Noting that $s(x)=x+x^3/6+o(x^3)$ as $x\to 0$ by Taylor's theorem, a straightforward calculation yields $\phi'(0+)=1/3$.

To show $\phi'>0$, it suffices to show $s>x$. To this end, define $g(x)=s(x)-x$. Then $g(0)=0$ and
\begin{equation*}
g'(x)=s'(x)-1=c(x)-1=\frac{\e^x+\e^{-x}}{2}-1=\frac{(\e^{x/2}-\e^{-x/2})^2}{2}>0
\end{equation*}
for $x>0$, so $g(x)>0$ for $x>0$. This shows $\phi'(x)>0$. Taking the derivative once again, we have
\begin{equation*}
\phi''(x)=\frac{2s'}{s^3}-\frac{2}{x^3}=2\left(\frac{c}{s^3}-\frac{1}{x^3}\right).
\end{equation*}
Therefore to show $\phi''<0$, it suffices to show $s^3/c>x^3$. To this end, define $h(x)=s(x)^3/c(x)-x^3$. Then
\begin{align*}
h'(x)&=\frac{3s^2s'c-s^3c'}{c^2}-3x^2=\frac{3s^2c^2-s^4}{c^2}-3x^2\\
&=3(c^2-1)-\frac{(c^2-1)^2}{c^2}-3x^2=2c^2-1-\frac{1}{c^2}-3x^2,\\
h''(x)&=4cc'+\frac{2c'}{c^3}-6x=4sc+\frac{2s}{c^3}-6x,\\
h'''(x)&=4s'c+4sc'+\frac{2s'}{c^3}-3\frac{2sc'}{c^4}-6=4c^2+4s^2+\frac{2}{c^2}-\frac{6s^2}{c^4}-6\\
&=8c^2-10-\frac{4}{c^2}+\frac{6}{c^4}=\frac{2}{c^4}(c^2-1)^2(4c^2+3)>0
\end{align*}
for $x>0$, where the last inequality follows from $c>1$ for $x>0$. Noting that $h'''(0)=h''(0)=h'(0)=h(0)=0$, it follows that $h(x)>0$ for $x>0$, implying $s^3/c>x^3$ and hence $\phi''<0$.
\end{proof}

Below, extend the domain of $\phi$ in \eqref{eq:phi} to the entire real line by setting $\phi(0)=0$ and $\phi(x)=-\phi(-x)$ for $x<0$. Then $\phi:\R\to (-1,1)$, $\phi$ is strictly increasing, and $\phi$ is concave (convex) for $x>0$ ($x<0$). Figure \ref{fig:phi_graph} shows its graph.

\begin{figure}[!htb]
\centering
\includegraphics[width=0.7\linewidth]{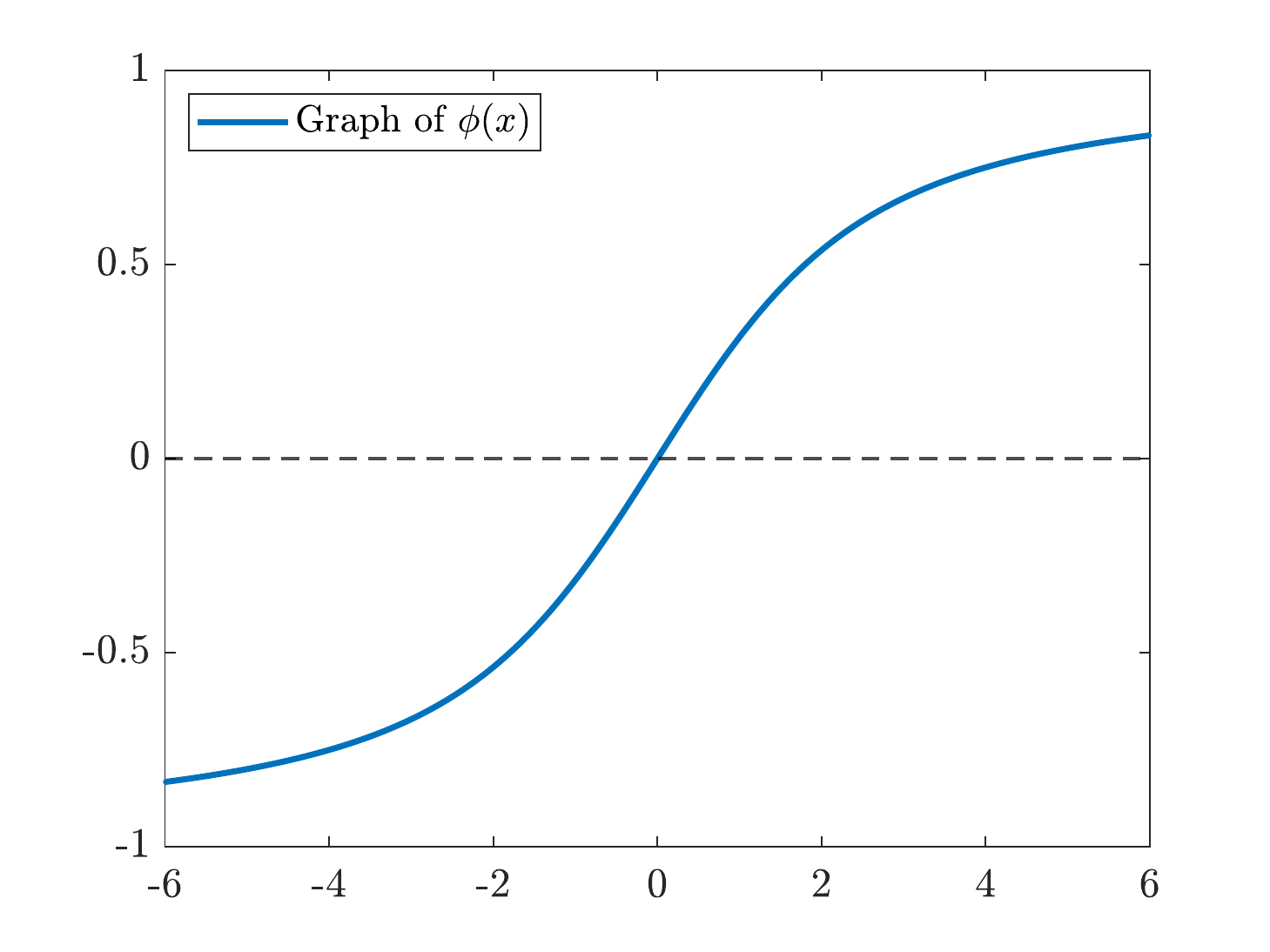}
\caption{Graph of $\phi(x)=\frac{\cosh x}{\sinh x}-\frac{1}{x}$.}\label{fig:phi_graph}
\end{figure}

We need the following lemma to prove Theorem \ref{thm:consistent}.

\begin{lem}\label{lem:lambda_bound}
Let $a<y<b$. Then the function
\begin{equation*}
J(\lambda)\coloneqq \begin{cases}
y\lambda-\log\left(\frac{\e^{\lambda b}-\e^{\lambda a}}{\lambda}\right) & (\lambda\neq 0)\\
-\log(b-a) & (\lambda=0)
\end{cases}
\end{equation*}
is strictly concave in $\lambda\in \R$ and achieves a unique maximum $\lambda^*$. Furthermore, letting $c=(a+b)/2$ and $d=b-a>0$, we have
\begin{equation}
\lambda^*=\frac{2}{d}\phi^{-1}\left(\frac{2(y-c)}{d}\right). \label{eq:lambda_bound}
\end{equation}
\end{lem}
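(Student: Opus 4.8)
The plan is as follows. The strict concavity of $J$ and the existence and uniqueness of its maximizer $\lambda^*$ are precisely the content of the $k\ge 2$ case of Proposition~\ref{prop:ME_density} (take $a=t_k$, $b=t_{k-1}$, $y=y_k$, noting $a<y<b$); I would simply invoke that result, or else re-derive it in one line from H\"older's inequality applied to $\lambda\mapsto\log\int_a^b\e^{\lambda s}\diff s$ together with the coercivity $J(\lambda)\to-\infty$ as $\lambda\to\pm\infty$, which holds because $a<y<b$. The genuinely new content is the closed form~\eqref{eq:lambda_bound}, and the strategy is to rewrite the first-order condition so that it involves only the function $\phi$ of Lemma~\ref{lem:phi}.

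First I would write down the first-order condition. Differentiating under the integral sign as in the proof of Proposition~\ref{prop:ME_density} gives $J'(\lambda)=y-m(\lambda)$, where $m(\lambda)\coloneqq\int_a^b s\e^{\lambda s}\diff s\big/\int_a^b\e^{\lambda s}\diff s$ is the mean of the truncated-exponential density proportional to $\e^{\lambda s}$ on $[a,b]$; thus $\lambda^*$ is characterized by $m(\lambda^*)=y$. Next I would compute $m(\lambda)$ explicitly: integration by parts gives $\int_a^b s\e^{\lambda s}\diff s=\lambda^{-1}(b\e^{\lambda b}-a\e^{\lambda a})-\lambda^{-2}(\e^{\lambda b}-\e^{\lambda a})$ and $\int_a^b\e^{\lambda s}\diff s=\lambda^{-1}(\e^{\lambda b}-\e^{\lambda a})$, so for $\lambda\neq 0$,
\[
m(\lambda)=\frac{b\e^{\lambda b}-a\e^{\lambda a}}{\e^{\lambda b}-\e^{\lambda a}}-\frac{1}{\lambda}.
\]

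Then I would substitute $a=c-d/2$ and $b=c+d/2$ and set $\mu=\lambda d/2$. Factoring $\e^{\lambda c}$ out of the numerator and denominator of the first term and using the definitions of $\cosh$ and $\sinh$, that term becomes $c+\frac{d}{2}\,\frac{\cosh\mu}{\sinh\mu}$; since $1/\lambda=(d/2)/\mu$, this yields
\[
m(\lambda)=c+\frac{d}{2}\left(\frac{\cosh\mu}{\sinh\mu}-\frac{1}{\mu}\right)=c+\frac{d}{2}\,\phi(\mu)
\]
by the definition~\eqref{eq:phi} of $\phi$. Hence the first-order condition $m(\lambda^*)=y$ is equivalent to $\phi(\mu^*)=\frac{2(y-c)}{d}$ with $\mu^*=\lambda^* d/2$. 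Finally I would invoke the extension of $\phi$ to a strictly increasing bijection $\R\to(-1,1)$ (Lemma~\ref{lem:phi} and the paragraph after it): since $a<y<b$ is equivalent to $\abs{y-c}<d/2$, i.e.\ $\frac{2(y-c)}{d}\in(-1,1)$, this equation has the unique solution $\mu^*=\phi^{-1}\left(\frac{2(y-c)}{d}\right)$, so $\lambda^*=\frac{2}{d}\phi^{-1}\left(\frac{2(y-c)}{d}\right)$, which is~\eqref{eq:lambda_bound}; the $\lambda=0$ case corresponds to $y=c$ and is consistent with the formula because $\phi^{-1}(0)=0$.

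There is no deep obstacle here. The only points requiring care are (i) the algebraic reduction of $m(\lambda)$ to $c+\frac{d}{2}\phi(\mu)$, where one must correctly track the $\e^{\lambda c}$ factors and use the identity $1/\lambda=(d/2)/\mu$, and (ii) checking that the argument $\frac{2(y-c)}{d}$ lands in the open interval $(-1,1)$ on which $\phi^{-1}$ is defined --- this is exactly where the hypothesis $a<y<b$ enters. I would also confirm continuity at $\lambda=0$ using $\phi(0+)=0$ from Lemma~\ref{lem:phi}, so that the closed form is valid across the case split in the statement.
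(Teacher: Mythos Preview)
Your proposal is correct and follows essentially the same approach as the paper: invoke Proposition~\ref{prop:ME_density} for strict concavity and existence/uniqueness, then reduce the first-order condition via the substitution $\mu=\lambda d/2$ to $\phi(\mu)=2(y-c)/d$ and invert using Lemma~\ref{lem:phi}. The only cosmetic difference is that the paper first rewrites $J(\lambda)=(y-c)\lambda-\log\bigl((\e^{\lambda d/2}-\e^{-\lambda d/2})/\lambda\bigr)$ and then differentiates, whereas you differentiate first and then substitute; the resulting first-order equation and its inversion are identical.
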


\begin{proof}
The strict concavity of $J$ and the existence and uniqueness of $\lambda^*$ follow from Proposition \ref{prop:ME_density}. If $y=c$, then by \eqref{eq:lambdak_sign} we have $\lambda^*=0$ and \eqref{eq:lambda_bound} is trivial. Suppose $y\neq c$. Then by \eqref{eq:lambdak_sign} we have $\lambda^*\gtrless 0$ according as $y\gtrless c$.

When $\lambda\neq 0$, by definition
\begin{equation*}
J(\lambda)=(y-c)\lambda-\log\left(\frac{\e^{\lambda d/2}-\e^{-\lambda d/2}}{\lambda}\right).
\end{equation*}
Since $J$ is differentiable and $\lambda^*$ maximizes $J$, taking the derivative we obtain
\begin{equation*}
0=J'(\lambda^*)=y-c+\frac{1}{\lambda^*}-\frac{d}2\frac{\cosh(\lambda^*d/2)}{\sinh(\lambda^*d/2)}\iff \frac{\cosh x}{\sinh x}-\frac{1}{x}=\frac{2(y-c)}{d},
\end{equation*}
where $x=\lambda^*d/2$. Noting that $y\in (a,b)$, $c=(a+b)/2$, and $d=b-a$, we have $2(y-c)/d\in (-1,1)$. Therefore \eqref{eq:lambda_bound} follows from the definition of $\phi$ and Lemma \ref{lem:phi}.
\end{proof}

\setcounter{step}0
\begin{proof}[Proof of Theorem \ref{thm:consistent}]
We divide our proof into four steps.

\begin{step}
Reduction to the case when $C$ is an interval.
\end{step}

Since $f$ is Lipschitz, it is continuous, so the domain $D=\set{y\in \R:f(y)>0}$ is open. Let $C\subset D$ be compact. For each $y\in C$ we can take an open interval $(a_y,b_y)$ with $y\in (a_y,b_y)$ and $[a_y,b_y]\subset D$. Since $C$ is compact and $C\subset \bigcup_{y\in C}(a_y,b_y)$, we can take a finite subcover $C\subset \bigcup_{j=1}^J (a_{y_j},b_{y_j})$. Therefore $C\subset \bigcup_{j=1}^J [a_{y_j},b_{y_j}]\subset D$, so the uniform consistency on $C$ follows if we show the uniform consistency on each compact interval $[a,b]\subset D$.

Below, to simplify the argument, fix $C=[a,b]$ and $C'=[a',b']$ such that $C\subset C'\subset D$ and $a'<a<b<b'$. By relabeling the thresholds $\set{t_k}_{k=1}^K$ if necessary, without loss of generality we may assume $a'<t_K<a<b<t_1<b'$ by Condition \ref{cond:KN}.

\begin{step}\label{step:as}
Let $r_3\coloneqq \frac{1-4r_1}{2}>0$. As $n\to\infty$, we have
\begin{subequations}
\begin{align}
\max_{2\le k\le K}\abs{\frac{\hat{q}_k-q_k}{(t_{k-1}-t_k)^2}} &=o_\mathrm{a.s.}(1)=O_p(n^{-r_3}), \label{eq:estim_q}\\
\max_{2\le k\le K}\abs{\frac{\hat{y}_k-y_k}{t_{k-1}-t_k}} &=o_\mathrm{a.s.}(1)=O_p(n^{-r_3}). \label{eq:estim_y}
\end{align}
\end{subequations}
\end{step}

Let $(\Omega,\mathcal{F},\Pr)$ be the probability space over which the random variables $\set{Y_i}_{i=1}^\infty$ are defined. For any $\eta>0$, define the event
\begin{equation*}
E_n\coloneqq \set{\omega\in \Omega: \max_{2\le k\le K}\abs{\frac{\hat{q}_k-q_k}{(t_{k-1}-t_k)^2}}>\eta}.
\end{equation*}
Let $\id_{I_k}$ denote the indicator function of $I_k$. Using $\sum_{i=1}^n\id_{I_k}(Y_i)=n_k-n_{k-1}=n\hat{q}_k$, we obtain
\begin{align*}
\Pr(E_n) &\le \sum_{k=2}^K\Pr\left(\abs{ \frac{\hat{q}_k-q_k}{(t_{k-1}-t_k)^2}} >\eta \right)=\sum_{k=2}^K\Pr\left( \abs{ \frac{1}{n}\sum_{i=1}^n \frac{\id_{I_k}(Y_{i})-q_k}{(t_{k-1}-t_k)^2}} >\eta \right) 
\\
&\le \sum_{k=2}^K\Pr\left(\abs{\sum_{i=1}^n (\id_{I_k}(Y_{i}) -q_k)} >\eta c_1^2 n^{1-2r_1}\right),
\end{align*}
where the last line follows from \eqref{eq:tdiff_order}. Applying \cite{Hoeffding1963}'s inequality to the last term, we obtain
\begin{equation}
\Pr(E_n)\le \sum_{k=2}^K 2\exp(-2c_1^4\eta^2n^{2-4r_1}/n)=2(K-1)\exp(-2c_1^4\eta^2n^{1-4r_1}). \label{eq:PEn}
\end{equation}

Since $1-4r_1>0$ by Assumption \ref{asmp:primitive}\ref{cond:KN}, it follows that $\sum_{n=1}^\infty \Pr(E_n)<\infty$. By the Borel-Cantelli lemma, we have
\begin{equation*}
\Pr(E_n~\text{infinitely often})=\Pr\left(\bigcap_{m=1}^\infty \bigcup_{n=m}^\infty E_n\right)=0.
\end{equation*}
Since $\eta>0$ is arbitrary, we obtain $\max_{2\le k\le K}\abs{\frac{\hat{q}_k-q_k}{(t_{k-1}-t_k)^2}}\to 0$ almost surely as $n\to\infty$.

To show that this term is $O_p(n^{-r_3})$ for $r_3=\frac{1-4r_1}{2}$, for any $\epsilon>0$, choose $\delta>0$ such that $2(K-1)\exp(-2c_1^4\delta^2)<\epsilon$. Then the definition of $E_n$ and \eqref{eq:PEn} imply
\begin{equation*}
\Pr\left(n^\frac{1-4r_1}{2}\max_{2\le k\le K}\abs{\frac{\hat{q}_k-q_k}{(t_{k-1}-t_k)^2}}>\delta\right)<\epsilon
\end{equation*}
for all $n$, so by the definition $r_3$ and that of the order in probability, we obtain \eqref{eq:estim_q}.

We next estimate the term $\max_{2\le k\le K}\abs{\frac{\hat{y}_k-y_k}{t_{k-1}-t_k}}$. Using
\begin{equation*}
\sum_{i=1}^n Y_i\id_{I_k}(Y_i)=S_{n_k}-S_{n_{k-1}}=(n_k-n_{k-1})\hat{y}_k=n\hat{q}_k\hat{y}_k
\end{equation*}
and noting that $Y_i\in C'$ (compact set) whenever $\id_{I_k}(Y_i)>0$, a similar calculation yields
\begin{equation}
\max_{2\le k\le K}\abs{\frac{\hat{q}_k\hat{y}_k-q_ky_k}{(t_{k-1}-t_k)^2}} =O_p(n^{-r_3})=o_\mathrm{a.s.}(1). \label{eq:estim_qy}
\end{equation}

Using the triangle inequality and $\hat{y}_k\in I_k\subset [t_K,t_1]\subset [a',b']$, we obtain
\begin{equation}
q_k\abs{\hat{y}_k-y_k} \le \abs{\hat{q}_k\hat{y}_k-q_ky_k} + \hat{y}_k\abs{\hat{q}_k-q_k} \le \abs{\hat{q}_k\hat{y}_k-q_ky_k} + b'\abs{\hat{q}_k-q_k}. \label{eq:qabs_ydiff}
\end{equation}
Assumption \ref{asmp:primitive}\ref{cond:f} implies that we can take $0<m\le M<\infty$ such that $f:C'\to [m,M]$. 
By the mean value theorem for integrals, for each $k \in \set{2,\dots,K}$, there exists some $u_k \in [t_k,t_{k-1})$ such that
\begin{equation}
\frac{q_k}{t_{k-1}-t_k}=\frac{\int_{t_{k-1}}^{t_k}f(y) \diff y}{t_{k-1}-t_k}=f(u_k) \in [m,M]. \label{eq:q_tdiff_bound}
\end{equation}
Therefore dividing \eqref{eq:qabs_ydiff} by $(t_{k-1}-t_k)^2$ and using \eqref{eq:q_tdiff_bound}, we obtain
\begin{equation*}
m\abs{\frac{\hat{y}_k-y_k}{t_{k-1}-t_k}}\le \abs{ \frac{\hat{q}_k\hat{y}_k-q_ky_k}{(t_{k-1}-t_k)^2}} + b'\abs{\frac{\hat{q}_k-q_k}{(t_{k-1}-t_k)^2}}
\end{equation*}
for each $k \in \set{2,\dots,K}$.
Taking the maximum over $k$, dividing by $m>0$, and letting $n\to \infty$, we obtain \eqref{eq:estim_y}.

\begin{step}\label{step:ytbar}
Let $\bar{t}_k=(t_k+t_{k-1})/2$ be the midpoint of $I_k$, $m=\inf_{y\in C'}f(y)>0$, and $L>0$ be the Lipschitz constant for $f$. 
Then
\begin{equation}
\max_{2 \le k \le K}\abs{\frac{y_k-\bar{t}_k}{(t_{k-1}-t_k)^2}}\le \frac{L}{6m}. \label{eq:ytbar}
\end{equation}
\end{step}

Since by Assumption \ref{asmp:primitive}\ref{cond:f} $f$ is Lipschitz continuous with constant $L>0$, it is absolutely continuous. Theorem 3.35 of \cite{Folland1999} implies that $f$ is almost everywhere differentiable with $\abs{f'}\le L$ and we can apply the fundamental theorem of calculus (in particular, integration by parts). Therefore using $q_k=\int_{I_k}f(y)\diff y$ and $q_ky_k=\int_{I_k}yf(y)\diff y$,
we obtain
\begin{align*}
q_k(y_k-\bar{t}_k)&=\int_{I_k}(y-\bar{t}_k)f(y)\diff y\\
&=\left[\frac{1}{2}(y-\bar{t}_k)^2f(y)\right]_{t_k}^{t_{k-1}}-\int_{I_k}\frac{1}{2}(y-\bar{t}_k)^2f'(y)\diff y\\
&=\frac{1}{8}(t_{k-1}-t_k)^2(f(t_{k-1})-f(t_k))-\int_{I_k}\frac{1}{2}(y-\bar{t}_k)^2f'(y)\diff y.
\end{align*}
Therefore, for every $k \in \set{2,\dots,K}$, we have
\begin{align*}
q_k\abs{y_k-\bar{t}_k}&\le \frac{1}{8}(t_{k-1}-t_k)^2\abs{f(t_{k-1})-f(t_k)}+\int_{I_k}\frac{1}{2}(y-\bar{t}_k)^2L\diff y\\
&\le \frac{1}{8}(t_{k-1}-t_k)^3L+\frac{1}{24}(t_{k-1}-t_k)^3L\\
&=\frac{L}{6}(t_{k-1}-t_k)^3.
\end{align*}
Dividing both sides by $q_k(t_{k-1}-t_k)^2$ and using \eqref{eq:q_tdiff_bound}, we obtain
\begin{equation*}
\abs{\frac{y_k-\bar{t}_k}{(t_{k-1}-t_k)^2}}\le \frac{L}{6m}
\end{equation*}
uniformly over $k \in \set{2,\dots,K}$, which shows \eqref{eq:ytbar}.

\begin{step}\label{step:infeasible}
Strong consistency and the convergence rate of the estimator $\hat{f}(y)$ for $f(y)$ uniformly over $y \in C$.
\end{step}

Letting $I_k=[t_k,t_{k-1})$, by assumption we have $C\subset \bigcup_{k=2}^KI_k\subset C'$. Therefore for any $y \in C$, we can take $k$ such that $y\in I_k$. Using the moment conditions and the mean value theorem for the integrals, we can take $u(y),v(y)\in I_k$ such that
\begin{subequations}\label{eq:fuv}
\begin{align}
\hat{q}_k&=\int_{I_k}\hat{f}(y)\diff y=(t_{k-1}-t_k)\hat{f}(u(y)),\\
q_k&=\int_{I_k}f(y)\diff y=(t_{k-1}-t_k)f(v(y)).
\end{align}
\end{subequations}
Therefore,
\begin{align}
\abs{\hat{f}(y)-f(y)}&\le \abs{\hat{f}(y)-\hat{f}(u(y))}+\abs{\hat{f}(u(y))-f(v(y))}+\abs{f(v(y))-f(y)}\notag \\
&\eqqcolon A_1(y)+A_2(y)+A_3(y). \label{eq:A123}
\end{align}
Using Assumption \ref{asmp:primitive}\ref{cond:f}, $y,v(y)\in I_k$, and Assumption \ref{asmp:primitive}\ref{cond:KN}, we can uniformly bound $A_3$ as
\begin{align}
\sup_{y \in C} A_3(y)&=\sup_{y \in C} \abs{f(v(y))-f(y)}\le \sup_{y \in C} L\abs{v(y)-y} \notag
\\
&\le L\max_{2 \le k \le K} (t_{k-1}-t_k)
\le Lc_2n^{-r_2}.\label{eq:A3}
\end{align}

Using \eqref{eq:fuv}, Step \ref{step:as}, and Assumption \ref{asmp:primitive}\ref{cond:KN}, we can uniformly bound $A_2$ as
\begin{align}
\sup_{y \in C} A_2(y)&= \max_{2 \le k \le K} \abs{\frac{\hat{q}_k-q_k}{t_{k-1}-t_k}}\notag \\
&\le c_2n^{-r_2} \max_{2\le k\le K}\abs{\frac{\hat{q}_k-q_k}{(t_{k-1}-t_k)^2}}=c_2n^{-r_2}O_p(n^{-r_3}). \label{eq:A2}
\end{align}

Noting that $\hat{f}$ is exponential on $I_k$ and hence monotonic, it follows from the functional form in \eqref{eq:fstar} and the mean value theorem that
\begin{align}
\sup_{y \in C} A_1(y)=&\sup_{y \in C} \abs{\hat{f}(y)-\hat{f}(u(y))}\le \max_{2 \le k \le K}\abs{\hat{f}(t_{k-1})-\hat{f}(t_k)} \notag \\
\le& \max_{2 \le k \le K} \hat{q}_k\abs{\lambda_k}=\max_{2 \le k \le K}\frac{\hat{q}_k}{t_{k-1}-t_k} \max_{2 \le k \le K} \abs{\lambda_k}(t_{k-1}-t_k).\label{eq:A1}
\end{align}
By Step \ref{step:as} and \eqref{eq:q_tdiff_bound}, we can bound the first term in the right-hand side of \eqref{eq:A1} as
\begin{equation}
\max_{2 \le k \le K}\frac{\hat{q}_k}{t_{k-1}-t_k}\le \max_{2 \le k \le K}\frac{q_k}{t_{k-1}-t_k}+\max_{2 \le k \le K}\abs{\frac{\hat{q}_k-q_k}{t_{k-1}-t_k}}\le M+O_p(n^{-r_3}). \label{eq:A1ub1}
\end{equation}
To bound the second term, note from Steps \ref{step:as} and \ref{step:ytbar} that
\begin{align*}
\max_{2 \le k \le K}\abs{\frac{\hat{y}_k-\bar{t}_k}{t_{k-1}-t_k}}&\le \max_{2 \le k \le K}\abs{\frac{\hat{y}_k-y_k}{t_{k-1}-t_k}}+\max_{2 \le k \le K}\abs{\frac{y_k-\bar{t}_k}{(t_{k-1}-t_k)^2}}(t_{k-1}-t_k)\\
&\le O_p(n^{-r_3})+\frac{L}{6m}c_2n^{-r_2}.
\end{align*}
Therefore it follows from Lemma \ref{lem:lambda_bound} that
\begin{align}
\max_{2 \le k \le K}\abs{\lambda_k}(t_{k-1}-t_k) &\le  2\phi^{-1}\left(\max_{2 \le k \le K} 2\abs{\frac{\hat{y}_k-\bar{t}_k}{t_{k-1}-t_k}}\right)\notag \\
&\le 2\phi^{-1}\left(\frac{L}{3m}c_2n^{-r_2}+O_p(n^{-r_3})\right).\label{eq:A1ub2}
\end{align}
Combining \eqref{eq:A1}, \eqref{eq:A1ub1} and \eqref{eq:A1ub2}, we obtain
\begin{equation}
\sup_{y \in C} A_1(y)\le 2(M+O_p(n^{-r_3}))\phi^{-1}\left(\frac{L}{3m}c_2n^{-r_2}+O_p(n^{-r_3})\right).\label{eq:A1ub3}
\end{equation}
The uniform bound \eqref{eq:rate} follows from \eqref{eq:A123}, \eqref{eq:A3}, \eqref{eq:A2}, \eqref{eq:A1ub3}, and Lemma \ref{lem:phi}. Noting that the $O_p(n^{-r_3})$ term is also $o_\mathrm{a.s.}(1)$ in this case, we also obtain the uniform strong consistency \eqref{eq:consistent}.
\end{proof}


\begin{proof}[Proof of Corollary \ref{cor:quantile}]
Since $Q_\tau \in \interior C$, there exists a $\delta_1$-ball $B_{\delta_1}(Q_\tau)$ around $Q_\tau$ such that $B_{\delta_1}(Q_\tau) \subset C$.
Also, since $Q_\tau \in \interior C \subset D$, $f(Q_\tau)>0$ by Assumption \ref{asmp:primitive}\ref{cond:f}.
Furthermore, $f$ is continuous by Assumption \ref{asmp:primitive}\ref{cond:f}. 
Therefore, $f$ is bounded away from zero in a neighborhood of $Q_\tau$.
That is, there exists a $\delta_2$-ball $B_{\delta_2}(Q_\tau)$ and $\ubar{f}>0$ such that $f(y) \ge \ubar{f}(y)$ for all $y \in B_{\delta_2}(Q_\tau)$.
Let $\delta = \frac{1}{2} \left(\delta_1 \wedge \delta_2\right)$.

We first show that
\begin{equation}\label{eq:quantile1}
F(Q_\tau - \delta) + \delta \cdot \ubar{f} \le  \hat F(\hat{Q}_\tau) \le F(Q_\tau + \delta) - \delta \ubar{f}.
\end{equation}
By way of contradiction, suppose that
$\hat F(\hat{Q}_\tau) > F(Q_\tau + \delta) - \delta \ubar{f}$.
Then, since $F$ and $\hat F$ are continuous by construction,
\begin{equation*}
F(Q_\tau) = \tau = \hat F(\hat{Q}_\tau) > F(Q_\tau + \delta) - \delta \ubar{f}
\end{equation*}
and hence
$\delta \ubar{f} \ge F(Q_\tau+\delta) - F(Q_\tau) > \delta \ubar{f}$,
which is a contradiction.
Therefore, we must have
$\hat F(\hat{Q}_\tau) \le F(Q_\tau + \delta) - \delta \ubar{f}$.
Sinimarly,
$F(Q_\tau - \delta) + \delta \ubar{f} \le \hat F(\hat{Q}_\tau)$ must hold too.
Therefore, \eqref{eq:quantile1} holds.

Since $F$ is non-decreasing by construction, and it is also strictly increasing on $B_{\delta}(Q_\tau) \subset B_{\delta_2}(Q_\tau)$, \eqref{eq:quantile1} implies  
\begin{equation}\label{eq:quantile2}
\hat{Q}_\tau \in B_\delta(Q_\tau) \subset C \quad 
\text{provided that} \quad 
\sup_{y \in C} \abs{\hat F(y) - F(y)} < \delta \ubar{f}.
\end{equation}
Also, noting that $\hat F(\hat{Q}_\tau) = \tau = F(Q_\tau)$, we obtain
\begin{equation}\label{eq:quantile3}
\hat F(\hat{Q}_\tau) - F(\hat{Q}_\tau) +  F(\hat{Q}_\tau) -  F(Q_\tau) = 0.
\end{equation}

With all these pieces put together, we get
\begin{align*}
& P\left(\limsup_{n \to \infty}\left\{ \omega \in \Omega : \abs{\hat{Q}_\tau - Q_\tau} = 0 \right\} \right)
\\
\stackrel{\text{(i)}}{=}& P\left(\limsup_{n \to \infty}\left\{ \omega \in \Omega : \abs{\int_{Q_\tau}^{\hat{Q}_\tau} f(y) \diff y} = 0 \right\} \right)
\\
=& P\left(\limsup_{n \to \infty}\left\{ \omega \in \Omega : \abs{F(\hat{Q}_\tau) - F(Q_\tau)} = 0 \right\} \right)
\\
\stackrel{\text{(ii)}}{=}& P\left(\limsup_{n \to \infty}\left\{ \omega \in \Omega : \abs{\hat F(\hat{Q}_\tau) - F(\hat{Q}_\tau) } = 0 \right\} \right)
\\
\ge& P\left(\limsup_{n \to \infty}\left\{ \omega \in \Omega :  \sup_{y \in C}\abs{\hat F(y) - F(y)} = 0~\&~\hat{Q}_\tau \in C \right\}\right)
\\
\stackrel{\text{(iii)}}{\ge}& P\left(\limsup_{n \to \infty}\left\{ \omega \in \Omega :  \sup_{y \in C}\abs{\hat F(y) - F(y)} = 0~\&~\sup_{y \in C}\abs{\hat F(y) - F(y)} < \delta \ubar{f} \right\}\right)
\\
=& P\left(\limsup_{n \to \infty}\left\{ \omega \in \Omega :  \sup_{y \in C}\abs{\hat F(y) - F(y)} = 0 \right\}\right)
\\
\stackrel{\text{(iv)}}{=}& 1.
\end{align*}
Here equality (i) follows from $f(y) > \ubar{f}$ for all $y \in B_\delta(Q_\tau)$,
equality (ii) follows from \eqref{eq:quantile3},
inequality (iii) follows from \eqref{eq:quantile2}, and 
equality (iv) follows from Corollary \ref{cor:cdf}.
This completes a proof of the corollary.
\end{proof}

\begin{proof}[Proof of Theorem \ref{thm:ME_threshold}]
\setcounter{step}{0}
Let $\cl T=[y_2,y_1]\times [y_3,y_2] \times\dots\times [y_K,y_{K-1}]$ be the closure of $T$ in \eqref{eq:T}, which is a nonempty compact interval. Let $\partial T\coloneqq \cl T\backslash T$ be the boundary of $T$. Extend $J^*$ to $\cl T$ by defining $J^*=\infty$ on $\partial T$.

We divide our proof into three steps.

\begin{step}
$J^*:\cl T\to \R\cup\set{\infty}$ is lower semicontinuous.
\end{step}

Motivated by \eqref{eq:Jk} and \eqref{eq:Jstar}, define $J:\R^K\times \cl T\to (-\infty,\infty]$ by
\begin{equation*}
J(\lambda;t)=\begin{cases}
\sum_{k=1}^Kq_k\left(y_k\lambda_k-\log\left(\frac{\e^{\lambda_k t_{k-1}}-\e^{\lambda_k t_k}}{\lambda_k}\right)+\log q_k\right), & (t\in T)\\
\infty, & (t\in \partial T)
\end{cases}
\end{equation*}
where the case $\lambda_k=0$ needs to be separately defined in the obvious way using \eqref{eq:Jk}. Then we can easily verify that $J(\lambda;\cdot): \cl T\to (-\infty,\infty]$ is continuous. Therefore $J^*(t)=\sup_{\lambda}J(\lambda;t)$ is lower semicontinuous.

\begin{step}
$J^*:T\to \R$ is strictly convex.
\end{step}

Since $J^*(t)=\sup_{\lambda}J(\lambda;t)$ and the supremum of convex functions is convex, it suffices to show that $J(\lambda;t)$ is strictly convex in $t\in T$. To this end it suffices to show that its Hessian $\nabla^2_t J$ is positive definite almost everywhere on $T$. A tedious but straightforward calculation yields
\begin{subequations}\label{eq:J_derivative}
\begin{align}
\frac{\partial J}{\partial t_k}&=\frac{q_k\lambda_k}{\e^{\lambda_k(t_{k-1}-t_k)}-1}-\frac{q_{k+1}\lambda_{k+1}}{1-\e^{\lambda_{k+1}(t_{k+1}-t_k)}}, \label{eq:DJtk}\\
\frac{\partial^2 J}{\partial t_k^2}&=\frac{q_k\lambda_k^2 \e^{\lambda_k(t_{k-1}-t_k)}}{(\e^{\lambda_k(t_{k-1}-t_k)}-1)^2}+\frac{q_{k+1}\lambda_{k+1}^2 \e^{\lambda_{k+1}(t_{k+1}-t_k)}}{(1-\e^{\lambda_{k+1}(t_{k+1}-t_k)})^2},\\
\frac{\partial^2 J}{\partial t_k \partial t_{k+1}}&=-\frac{q_{k+1}\lambda_{k+1}^2 \e^{\lambda_{k+1}(t_{k+1}-t_k)}}{(1-\e^{\lambda_{k+1}(t_{k+1}-t_k)})^2},\\
\frac{\partial^2 J}{\partial t_k \partial t_{k+\ell}}&=0 \quad \text{for $\ell>1$},
\end{align}
\end{subequations}
which are valid for $k=1$ by setting $\e^{\lambda_1t_0}=0$. Collecting the partial derivatives \eqref{eq:J_derivative} into a matrix, we obtain the Hessian
\begin{equation}
\nabla^2_t J=H_{K-1}\coloneqq \begin{bmatrix}
c_1+c_2 & -c_2 & 0 & \cdots & 0\\
-c_2 & c_2+c_3 & -c_3 & \ddots & \vdots\\
0 & \ddots & \ddots & \ddots & 0\\
\vdots & \ddots & -c_{K-2} & c_{K-2} + c_{K-1} & -c_{K-1} \\
0 & \cdots & 0 & -c_{K-1} & c_{K-1}+c_K
\end{bmatrix},\label{eq:J_hessian}
\end{equation}
where
\begin{equation*}
c_k\coloneqq \frac{q_k\lambda_k^2 \e^{\lambda_k(t_{k-1}-t_k)}}{(\e^{\lambda_k(t_{k-1}-t_k)}-1)^2}\ge 0,
\end{equation*}
with strict inequality if $\lambda_k\neq 0$. Since by \eqref{eq:lambdak_sign} we have $\lambda_k=0$ if and only if $y_k=\frac{t_k+t_{k-1}}2$, we have $c_k>0$ for all $k$ almost everywhere on $T$. Below, consider such $t$.

As is well known, a real symmetric matrix is positive definite if and only if all leading principal minors are positive. In general, for a tridiagonal matrix
\begin{equation*}
A_n\coloneqq \begin{bmatrix}
d_1 & a_1 & 0 & \cdots & 0\\
b_2 & d_2 & a_2 & \ddots & \vdots\\
0 & \ddots & \ddots & \ddots & 0\\
\vdots & \ddots & b_{n-1} & d_{n-1} & a_{n-1} \\
0 & \cdots & 0 & b_n & d_n
\end{bmatrix},
\end{equation*}
its determinant $D_n\coloneqq \det A_n$ satisfies the three-term recurrence relation 
\begin{equation}
D_n=d_nD_{n-1}-b_na_{n-1}D_{n-2}, \label{eq:TTRR}
\end{equation}
where $D_0=1$ and $D_1=d_1$ \citep{El_Mikkawy_2004}. Applying \eqref{eq:TTRR} to $H_{K-1}$ in \eqref{eq:J_hessian} with $n=K-1$, since $d_n=c_n+c_{n+1}$ and $b_n=a_{n-1}=-c_n$, we obtain
\begin{equation}
D_n=(c_n+c_{n+1})D_{n-1}-c_n^2D_{n-2}. \label{eq:TTRR_H}
\end{equation}
Clearly $D_1=d_1=c_1+c_2>0$. Let us show by induction that
\begin{equation}
D_n=c_{n+1}D_{n-1}+\prod_{k=1}^n c_k,\label{eq:fn}
\end{equation}
which implies $D_n>0$ for all $n$. If $n=1$, then $D_1=c_1+c_2=c_2D_0+c_1$ because $D_0=1$, so \eqref{eq:fn} holds. Suppose \eqref{eq:fn} holds for some $n$. Then for $n+1$, using \eqref{eq:TTRR_H} and the induction hypothesis, we obtain
\begin{align*}
D_{n+1}&=(c_{n+1}+c_{n+2})D_n-c_{n+1}^2D_{n-1}\\
&=c_{n+2}D_n+c_{n+1}\left(c_{n+1}D_{n-1}+\prod_{k=1}^n c_k\right)-c_{n+1}^2D_{n-1}\\
&=c_{n+2}D_n+\prod_{k=1}^{n+1}c_k,
\end{align*}
so \eqref{eq:fn} holds for $n+1$. This completes the proof of the strict convexity of $J(\lambda;t)$ in $t$, and hence $J^*(t)=\max_{\lambda} J(\lambda;t)$ is strictly convex in $t\in T$.

\begin{step}
There exists a unique $t^*\in T$ that minimizes $J^*$. For this $t^*$, the maximum entropy density $f^*$ in \eqref{eq:fstar} is continuous on $[t_K,\infty)$.
\end{step}

Since $\cl T$ is nonempty compact, $J^*:\cl T\to (-\infty,\infty]$ is lower semicontinuous, and $J^*$ is finite-valued on $T$, it achieves a minimum at some $t^*\in \cl T$. Since $J^*=\infty$ on $\partial T$, it must be $t^*\in T$. Since $J^*$ is strictly convex on $T$, the minimizer $t^*$ is unique. 

Let us next show the continuity of $f^*$. Since $f^*$ is piecewise exponential (hence continuous), it suffices to show the continuity at the thresholds $t^*_k$. Since $J(\lambda;\cdot):T\to \R$ is differentiable, it follows from the first-order condition, the envelope theorem, and \eqref{eq:DJtk} that
\begin{align}
& 0=\frac{\partial J^*}{\partial t_k}(t^*)=\frac{\partial J}{\partial t_k}(t^*)=\frac{q_k\lambda_k}{\e^{\lambda_k(t^*_{k-1}-t^*_k)}-1}-\frac{q_{k+1}\lambda_{k+1}}{1-\e^{\lambda_{k+1}(t^*_{k+1}-t^*_k)}} \notag \\
\iff & \frac{q_k\lambda_k\e^{\lambda_kt^*_k}}{\e^{\lambda_kt^*_{k-1}}-\e^{\lambda_kt^*_k}}=\frac{q_{k+1}\lambda_{k+1}\e^{\lambda_{k+1}t^*_k}}{\e^{\lambda_{k+1}t^*_k}-\e^{\lambda_{k+1}t^*_{k+1}}}.\label{eq:foc_tk}
\end{align}
Comparing \eqref{eq:fstar} and \eqref{eq:foc_tk}, we obtain $f^*(t^*_k-)=f^*(t^*_k+)$, so $f^*$ is continuous.
\end{proof}

\section{Numerical issues}\label{sec:numerical}

Although conceptually straightforward, numerically maximizing $J_k$ in \eqref{eq:Jk} over $\lambda$ can be unstable because the values of $\set{(t_k,y_k,t_{k-1})}_{k=1}^K$ change by many orders of magnitude in typical data, for instance $t_K=1$ and $t_1=10^7$ in Table \ref{t:IRS2019}. For this reason it is useful to choose some scaling factor $s>0$ and consider $(st_k,sy_k,st_{k-1})$. Define
\begin{equation*}
J_k(\lambda;s)\coloneqq sy_k\lambda-\log\left(\frac{\e^{\lambda st_{k-1}}-\e^{\lambda st_k}}{\lambda}\right)=J_k(\lambda s)-\log s.
\end{equation*}
Maximizing both sides over $\lambda\in \R$ and using the definition of $\lambda_k$ in Proposition \ref{prop:ME_density}, it follows that
\begin{equation*}
\lambda_k=s \argmax_{\lambda\in \R}J_k(\lambda;s) \quad \text{and} \quad J_k(\lambda_k)=\max_{\lambda\in \R}J_k(\lambda;s)+\log s.
\end{equation*}
Therefore to numerically compute $\lambda_k$ and $J_k(\lambda_k)$, we may maximize $J_k(\lambda,s)$ for some scaling factor $s>0$ (say $s=1/t_{k-1}$ so that $st_k<sy_k<st_{k-1}=1$), multiply its maximizer by $s$, and add $\log s$ to its maximum value.

\section{Lorenz curve for double Pareto distribution}\label{sec:dPLorenz}

Integrating the double Pareto density (with $M=1$), it is easy to show that the CDF of double Pareto is
\begin{equation*}
F(y)=\begin{cases}
\frac{\alpha}{\alpha+\beta}y^\beta, & (0\le y\le 1)\\
1-\frac{\beta}{\alpha+\beta}y^{-\alpha}. & (y\ge 1)
\end{cases}
\end{equation*}

Let us compute the Lorenz curve, which is implicitly defined by
\begin{equation*}
L(F(y))=\frac{\int_0^y tf(t)\diff t}{\mu},
\end{equation*}
where $\mu$ is the mean. If $y\le 1$, then
\begin{equation*}
\int_0^y tf(t)\diff t=\int_0^y \frac{\alpha\beta}{\alpha+\beta}t^\beta \diff t=\frac{\alpha\beta}{(\alpha+\beta)(\beta+1)}y^{\beta+1}.
\end{equation*}
If $y\ge 1$, then
\begin{align*}
\int_0^x tf(t)\diff t&=\int_0^1 tf(t)\diff t+\int_1^y tf(t)\diff t\\
&=\frac{\alpha\beta}{(\alpha+\beta)(\beta+1)}+\int_1^y \frac{\alpha\beta}{\alpha+\beta}t^{-\alpha} \diff t\\
&=\frac{\alpha\beta}{(\alpha+\beta)(\alpha-1)}\left(\frac{\alpha+\beta}{\beta+1}-y^{-\alpha+1}\right).
\end{align*}
Letting $y\to\infty$, the mean of double Pareto (assuming $\alpha>1$) is
\begin{equation*}
\mu=\frac{\alpha\beta}{(\alpha-1)(\beta+1)}.
\end{equation*}

Thus if $y\le 1$ (or equivalently $F=F(y)\le \frac{\alpha}{\alpha+\beta}$), then
\begin{align*}
L(F)&=\frac{\alpha-1}{\alpha+\beta}y^{\beta+1}=\frac{\alpha-1}{\alpha+\beta}\left(\frac{\alpha+\beta}{\alpha}F\right)^{1+1/\beta}\\
&=(\alpha-1)(\alpha+\beta)^{1/\beta}\alpha^{-1-1/\beta}F^{1+1/\beta}.
\end{align*}
If $y\ge 1$ (or equivalently $F=F(y)\ge \frac{\alpha}{\alpha+\beta}$), then
\begin{align*}
L(F)&=1-\frac{\beta+1}{\alpha+\beta}y^{-\alpha+1}=1-\frac{\beta+1}{\alpha+\beta}\left(\frac{\alpha+\beta}{\beta}(1-F)\right)^{1-1/\alpha}\\
&=1-(\beta+1)(\alpha+\beta)^{-1/\alpha}\beta^{-1+1/\alpha}(1-F)^{1-1/\alpha}.
\end{align*}
Putting all the pieces together, the Lorenz curve for a double Pareto distribution with $\alpha>1$ and $\beta>0$ is
\begin{equation*}
L(x)=\begin{cases}
(\alpha-1)(\alpha+\beta)^{1/\beta}\alpha^{-1-1/\beta}x^{1+1/\beta}, & (0\le x\le \frac{\alpha}{\alpha+\beta})\\
1-(\beta+1)(\alpha+\beta)^{-1/\alpha}\beta^{-1+1/\alpha}(1-x)^{1-1/\alpha}. & (\frac{\alpha}{\alpha+\beta}\le x\le 1)
\end{cases}
\end{equation*}

\section{\cite{Piketty2003}'s Pareto interpolation method}\label{sec:piketty}

\cite{Piketty2003}'s Pareto interpolation method for constructing top income shares is widely used, for example in \cite{PikettySaez2003} and the World Inequality Database.\footnote{\url{https://wid.world/}} Because the detailed description of the method is relegated to \citet[Appendix B, Section 1.1, pp.~592--599]{Piketty2001book}, which is written in French, for completeness we describe the method here.

We use the same notation as in Section \ref{subsec:ME}. 
For simplicity assume that we are interested in the income distribution and the top income shares of the taxpayers, so the sample size is $n=n_K$. 
Suppose we observe the lower income threshold for income group $k$, which we denote by $t_k\coloneqq Y_{(n_k)}$. Let $p_k\coloneqq n_k/n$ be the top fractile corresponding to the $k$-th income threshold $t_k$ and
\begin{equation*}
s_k\coloneqq \frac{1}{n_k}\sum_{i=1}^{n_k}Y_{(i)}=\frac{S_{n_k}}{n_k}
\end{equation*}
be the average income of taxpayers with income above $t_k$.

Let $b_k=s_k/t_k$ be the ratio between the average income of individuals exceeding $t_k$ and the income threshold $t_k$. If income $Y$ is Pareto distributed (in the upper tail), then for large enough income level $y$, the CDF takes the form $F(y)=1-Ay^{-\alpha}$ for some $A>0$ and Pareto exponent $\alpha>1$. Therefore for any large enough income threshold $t$, we have
\begin{equation}
b(t)\coloneqq \frac{\E[Y \mid Y\ge t]}{t}=\frac{\int_t^\infty yF'(y)\diff y}{t(1-F(t))}=\frac{\alpha}{\alpha-1}. \label{eq:local_coeff}
\end{equation}
\cite{Piketty2001book} refers to $b_k=s_k/t_k$ as the (local) Pareto \emph{coefficient}. When the income distribution has a Pareto upper tail, the (local) Pareto \emph{exponent} can be recovered from \eqref{eq:local_coeff} as
\begin{equation}
\alpha_k=\frac{b_k}{b_k-1}=\frac{1}{1-t_k/s_k}. \label{eq:local_exp}
\end{equation}

Let $p\in (0,1]$ and suppose we would like to construct the top $p$ fractile income share. For example, $p=0.01$ corresponds to the top 1\% income share. \cite{Piketty2001book} proceeds as follows to construct the top $p$ fractile income share. First, let $p_k$ be the closest proportion to $p$ directly observed in the data, and $t_k$ and $\alpha_k$ be the corresponding income threshold and local Pareto exponent. Then one supposes that the income distribution is locally exactly Pareto, and therefore the CDF is
\begin{equation*}
F(y)=1-p_k(y/t_k)^{-\alpha_k}.
\end{equation*}
The income threshold corresponding to the top $p$ fractile can be computed as
\begin{equation*}
1-p=1-p_k(y/t_k)^{-\alpha_k}\iff y=t(p)\coloneqq t_k(p_k/p)^{1/\alpha_k}.
\end{equation*}
Noting that the sample size is $n$, the total income of taxpayers in the top $p$ fractile can be computed as
\begin{align}
S(p)&\coloneqq n\int_{t(p)}^\infty yF'(y)\diff y=n\int_{t(p)}^\infty \alpha_kp_k(y/t_k)^{-\alpha_k}\diff y \notag \\
&=n\frac{\alpha_k}{\alpha_k-1}p_kt_k^{\alpha_k}t(p)^{1-\alpha_k}=n\frac{\alpha_k}{\alpha_k-1}pt_k(p_k/p)^{1/\alpha_k} \notag \\
&=ns_kp_k^{1/\alpha_k}p^{1-1/\alpha_k}. \label{eq:Yq}
\end{align}
The top $p$ fractile income share can then be computed as $S(p)/S(1)$. If the researcher is interested in the top income share of all income earners (including those who do not file for taxes), one would adjust the sample size $n$ and the total income $S(1)$ from other sources (\eg, population statistics and national accounts); see for example \citet[Appendix A]{PikettySaez2001WP}.

\section{Additional simulations}\label{sec:additional}

In the main text, we used the double Pareto distribution to evaluate the bias and RMSE in the top income shares. Here, we repeat the analysis in Table \ref{t:simu} with data generated from lognormal, gamma, and Weibull distributions, which are described in Table \ref{t:model}. 
For the lognormal distribution, we set $\sigma = 1.5$. 
For the gamma and the Weibull distributions, we set $a=1$ and $k=1$, respectively, so that they are both identical to the exponential distribution with unit mean (for which it is possible to compute top income shares in closed-form). 
Other setups and the four estimation methods are the same as those described in Section \ref{sec:Simulation}.
Tables \ref{t:simu_lognormal} and \ref{t:simu_gamma} present the results.

\begin{table}[!htb]
\centering
${}$
\caption{Simulation results of four methods for top shares with lognormal data.}\label{t:simu_lognormal}
\resizebox{\textwidth}{!}{
\begin{tabular}{lrrrrrrrrrrrr}
\toprule

\multicolumn{13}{l}{\bf Relative Bias}\\
$p_0$ & 0.001 & 0.01 & 0.05 & 0.1 & 0.2 & 0.3 & 0.4 & 0.5 & 0.6 & 0.7 & 0.8 & 0.9 \\
\midrule
$n$             & \multicolumn{12}{c}{\bf ME} \\
$10^4$ &-0.016 &-0.003 &-0.001	 &0.000	 &0.000	 &0.000	 &0.000	  &0.000	 &0.000	 &0.000	 &0.000	 &0.000\\
$10^5$ &-0.003 &0.000 &0.000	 &0.000	 &0.000	 &0.000	 &0.000	  &0.000	 &0.000	 &0.000	 &0.000	 &0.000\\
$10^6$ &-0.002 &0.00 &0.000	 &0.000	 &0.000	 &0.000	 &0.000	  &0.000	 &0.000	 &0.000	 &0.000	 &0.000\\
\midrule
$n$            & \multicolumn{12}{c}{\bf KP} \\
$10^4$ & 0.897   &	0.112	&-0.036	&-0.034	&-0.012	&0.001	&0.006	&0.006	&0.004	&0.002	&0.000	&-0.001\\
$10^5$ & 0.907   &	0.113	&-0.035	&-0.034	&-0.011	&0.001	&0.006	&0.006	&0.004	&0.002	&0.000	&-0.001\\
$10^6$ & 0.908   &	0.113	&-0.035	&-0.034	&-0.011	&0.001	&0.006	&0.006	&0.004	&0.002	&0.000	&-0.001\\
\midrule
$n$          & \multicolumn{12}{c}{\bf VA} \\
$10^4$ &	0.412	&0.044	&-0.016	&-0.013	&-0.004	&0.001	&0.003	&0.004	&0.003	&0.002	&0.001	&0.000\\
$10^5$ &	0.411	&0.044	&-0.015	&-0.012	&-0.003	&0.002	&0.003	&0.004	&0.003	&0.002	&0.001	&0.000\\
$10^6$ &	0.410	&0.044	&-0.015	&-0.012	&-0.003	&0.002	&0.003	&0.004	&0.003	&0.002	&0.001	&0.000\\
\midrule
$n$          & \multicolumn{12}{c}{\bf HGBRC} \\
$10^4$ &	0.296	&0.078	&0.018	&0.007	&0.002	&0.001	&0.000	&0.000	&0.000	&0.000	&0.000	&0.000\\
$10^5$ &	0.256	&0.084	&0.026	&0.013	&0.005	&0.003	&0.002	&0.001	&0.001	&0.000	&0.000	&0.000\\
$10^6$ &	0.235	&0.080	&0.026	&0.013	&0.005	&0.003	&0.002	&0.001	&0.001	&0.000	&0.000	&0.000\\

\toprule
\multicolumn{13}{l}{\bf Relative RMSE}\\
$p_0$ & 0.001 & 0.01 & 0.05 & 0.1 & 0.2 & 0.3 & 0.4 & 0.5 & 0.6 & 0.7 & 0.8 & 0.9 \\
\midrule
$n$             & \multicolumn{12}{c}{\bf ME} \\
$10^4$ &	0.342	&0.106	&0.037	&0.022	&0.011	&0.006	&0.004	&0.002	&0.001	&0.001	&0.000	&0.000\\
$10^5$ &	0.109	&0.033	&0.012	&0.007	&0.004	&0.002	&0.001	&0.001	&0.000	&0.000	&0.000	&0.000\\
$10^6$ &	0.036	&0.010	&0.004	&0.002	&0.001	&0.001	&0.000	&0.000	&0.000	&0.000	&0.000	&0.000\\
\midrule
$n$             & \multicolumn{12}{c}{\bf KP} \\
$10^4$ &	0.932	&0.129	&0.046	&0.040	&0.017	&0.008	&0.008	&0.007	&0.005	&0.002	&0.000	&0.001\\
$10^5$ &	0.911	&0.115	&0.037	&0.034	&0.012	&0.003	&0.006	&0.006	&0.004	&0.002	&0.000	&0.001\\
$10^6$ &	0.909	&0.113	&0.036	&0.034	&0.011	&0.002	&0.006	&0.006	&0.004	&0.002	&0.000	&0.001\\
\midrule
$n$        & \multicolumn{12}{c}{\bf VA} \\
$10^4$ &	0.423	&0.071	&0.038	&0.027	&0.013	&0.008	&0.005	&0.004	&0.003	&0.002	&0.001	&0.000\\
$10^5$ &	0.412	&0.047	&0.019	&0.014	&0.005	&0.003	&0.004	&0.004	&0.003	&0.002	&0.001	&0.000\\
$10^6$ &	0.411	&0.044	&0.016	&0.013	&0.003	&0.002	&0.003	&0.004	&0.003	&0.002	&0.001	&0.000\\
\midrule
$n$          & \multicolumn{12}{c}{\bf HGBRC} \\
$10^4$ &	0.587	&0.190	&0.067	&0.037	&0.018	&0.010	&0.006	&0.004	&0.002	&0.001	&0.001	&0.000\\
$10^5$ &	0.313	&0.101	&0.032	&0.016	&0.007	&0.004	&0.002	&0.001	&0.001	&0.001	&0.000	&0.000\\
$10^6$ &	0.254	&0.084	&0.026	&0.013	&0.006	&0.003	&0.002	&0.001	&0.001	&0.000	&0.000	&0.000\\

\bottomrule
\end{tabular}
}
\caption*{\footnotesize Note: See Table \ref{t:simu} for details.}
\end{table}

\begin{table}[!htb]
\centering
${}$
\caption{Simulation results of four methods for top shares with gamma data.}\label{t:simu_gamma}
\resizebox{\textwidth}{!}{
\begin{tabular}{lrrrrrrrrrrrr}
\toprule

\multicolumn{13}{l}{\bf Relative Bias}\\
$p_0$ & 0.001 & 0.01 & 0.05 & 0.1 & 0.2 & 0.3 & 0.4 & 0.5 & 0.6 & 0.7 & 0.8 & 0.9 \\
\midrule
$n$             & \multicolumn{12}{c}{\bf ME} \\
$10^4$ &-0.020 &0.001 &0.001	 &0.001	 &0.000	 &0.000	 &0.000	  &0.000	 &0.000	 &0.000	 &0.000	 &0.000\\
$10^5$ &0.000  &0.000 &0.000	 &0.000	 &0.000	 &0.000	 &0.000	  &0.000	 &0.000	 &0.000	 &0.000	 &0.000\\
$10^6$ &-0.002 &0.000 &0.000	 &0.000	 &0.000	 &0.000	 &0.000	  &0.000	 &0.000	 &0.000	 &0.000	 &0.000\\
\midrule
$n$            & \multicolumn{12}{c}{\bf KP} \\
$10^4$ &	-0.886	&0.313	&0.206	&0.085	&-0.003	&-0.027	&-0.029	&-0.021	&-0.011	&-0.001	&0.004	&0.004\\
$10^5$ &	-0.905	&0.313	&0.205	&0.084	&-0.003	&-0.027	&-0.029	&-0.022	&-0.011	&-0.001	&0.004	&0.004\\
$10^6$ &	-0.906	&0.313	&0.205	&0.084	&-0.003	&-0.027	&-0.029	&-0.022	&-0.011	&-0.001	&0.004	&0.004\\
\midrule
$n$          & \multicolumn{12}{c}{\bf VA} \\
$10^4$ &	11.773	&1.777	&0.236	&-0.012	&-0.107	&-0.096	&-0.062	&-0.029	&-0.008	&0.002	&0.003	&0.001\\
$10^5$ &	11.763	&1.775	&0.235	&-0.013	&-0.108	&-0.096	&-0.062	&-0.029	&-0.008	&0.002	&0.003	&0.001\\
$10^6$ &	11.761	&1.774	&0.235	&-0.013	&-0.108	&-0.096	&-0.062	&-0.029	&-0.008	&0.002	&0.003	&0.001\\
\midrule
$n$          & \multicolumn{12}{c}{\bf HGBRC} \\
$10^4$ &	-0.002	&-0.036	&-0.022	&-0.012	&-0.004	&-0.001	&0.001	&0.001	&0.001	&0.001	&0.000	&0.000\\
$10^5$ &	0.013	&-0.010	&-0.008	&-0.005	&-0.002	&-0.001	&0.000	&0.000	&0.000	&0.000	&0.000	&0.000\\
$10^6$ &	0.009	&-0.007	&-0.006	&-0.004	&-0.001	&0.000	&0.000	&0.000	&0.000	&0.000	&0.000	&0.000\\

\toprule
\multicolumn{13}{l}{\bf Relative RMSE}\\
$p_0$ & 0.001 & 0.01 & 0.05 & 0.1 & 0.2 & 0.3 & 0.4 & 0.5 & 0.6 & 0.7 & 0.8 & 0.9 \\
\midrule
$n$             & \multicolumn{12}{c}{\bf ME} \\
$10^4$ &	0.316	&0.097	&0.038	&0.024	&0.013	&0.009	&0.006	&0.004	&0.002	&0.001	&0.001	&0.000\\
$10^5$ &	0.103	&0.031	&0.012	&0.008	&0.004	&0.003	&0.003	&0.001	&0.001	&0.000	&0.000	&0.000\\
$10^6$ &	0.032	&0.009	&0.004	&0.002	&0.001	&0.001	&0.001	&0.000	&0.000	&0.000	&0.000	&0.000\\
\midrule
$n$             & \multicolumn{12}{c}{\bf KP} \\
$10^4$ &	0.891	&0.330	&0.209	&0.087	&0.011	&0.028	&0.029	&0.022	&0.011	&0.002	&0.004	&0.004\\
$10^5$ &	0.906	&0.315	&0.206	&0.084	&0.005	&0.027	&0.029	&0.022	&0.011	&0.001	&0.004	&0.004\\
$10^6$ &	0.907	&0.313	&0.205	&0.084	&0.003	&0.027	&0.029	&0.022	&0.011	&0.001	&0.004	&0.004\\
\midrule
$n$        & \multicolumn{12}{c}{\bf VA} \\
$10^4$ &	11.776	&1.777	&0.237	&0.019	&0.108	&0.096	&0.062	&0.030	&0.008	&0.002	&0.003	&0.001\\
$10^5$ &	11.763	&1.775	&0.235	&0.014	&0.108	&0.096	&0.062	&0.030	&0.008	&0.002	&0.003	&0.001\\
$10^6$ &	11.761	&1.774	&0.235	&0.013	&0.108	&0.096	&0.062	&0.030	&0.008	&0.002	&0.003	&0.001\\
\midrule
$n$          & \multicolumn{12}{c}{\bf HGBRC} \\
$10^4$ &	0.285	&0.113	&0.046	&0.027	&0.014	&0.008	&0.006	&0.004	&0.002	&0.001	&0.001	&0.000\\
$10^5$ &	0.090	&0.035	&0.015	&0.009	&0.004	&0.003	&0.002	&0.001	&0.001	&0.000	&0.000	&0.000\\
$10^6$ &	0.027	&0.013	&0.007	&0.004	&0.002	&0.001	&0.001	&0.000	&0.000	&0.000	&0.000	&0.000\\

\bottomrule
\end{tabular}
}
\caption*{\footnotesize See Table \ref{t:simu} for details.}
\end{table}

Our findings here are similar to those presented in Section \ref{sec:Simulation}.  
First, our proposed ME method performs very well in terms of both the bias and RMSE. 
They are smaller than those of the other methods for most of the $(n,p_0)$ combinations, especially the bias. 
Second, the KP and the VA methods both impose some parametric assumptions on the Lorenz curve and hence implicitly on the underlying density.  
The misspecification biases could be quite large depending on the true data generating process, especially for the VA method.

\clearpage

\end{document}